\documentclass[11pt]{article}
\pdfoutput=1
\usepackage[utf8]{inputenc}
\usepackage[margin=0.75in]{geometry}
\usepackage{graphicx, lscape}
\usepackage{graphics, color}
\usepackage{caption}
\usepackage{subcaption}
\usepackage{atbegshi}% http://ctan.org/pkg/atbegshi
\AtBeginDocument{\AtBeginShipoutNext{\AtBeginShipoutDiscard}}

\usepackage{fancyhdr}
\fancyhf{}
%\fancyhead[R]{\thepage}

\usepackage{answers}
\usepackage{setspace}
\usepackage{graphicx}
\usepackage{enumitem}
\usepackage{multicol}
\usepackage{mathrsfs}
\usepackage{amsmath,amsthm,amssymb,amscd}
\usepackage[cjk]{kotex}

\usepackage{url}
\usepackage[hidelinks]{hyperref}
\usepackage{cleveref}

\linespread{1}

\newtheorem{proposition}{Proposition}[]

\newtheorem{theorem}{Theorem}[]
\newtheorem{lemma}{Lemma}[]

\theoremstyle{definition}
\newtheorem{definition}{Definition}[]
\newtheorem{remark}{Remark}[]
\newtheorem{example}{Example}[]
\usepackage{fancyhdr}
\fancyhf{}
%fancyhead[R]{\thepage}

\usepackage{mathtools}
\usepackage{algorithm}
\usepackage{algpseudocode}
\usepackage{fmtcount}
\usepackage{graphicx}
\usepackage{subcaption}
\usepackage{afterpage}
\usepackage[noadjust]{cite}
\usepackage[font=small,labelfont=bf]{caption}
\usepackage[labelfont=bf]{caption}
\usepackage{setspace}
\allowdisplaybreaks
\usepackage[title,titletoc]{appendix}
\usepackage{epsfig}
\usepackage{graphicx}
\usepackage{hhline}
\usepackage{latexsym}
\usepackage{multirow} 
\usepackage{array}
\usepackage{caption}
\usepackage{subcaption}
\usepackage{color}
\usepackage{natbib}
\usepackage{rotating}
\usepackage{graphicx,lscape}
\usepackage{graphicx}
\usepackage{epsfig}
\usepackage{graphics,color, graphicx}
\usepackage{latexsym}
\usepackage{fullpage}
\usepackage{booktabs,fixltx2e}
\usepackage[flushleft]{threeparttable}
\usepackage{fancyvrb}
\usepackage{pdfpages}
\usepackage{fmtcount}
\usepackage[symbol]{footmisc}

\newcommand{\indep}{\perp \!\!\! \perp}

%\renewcommand{\baselinestretch}{1.1}
% \usepackage{sectsty}
% \sectionfont{\fontsize{11}{13}\selectfont}
\usepackage{titlesec}

\begin{document}
	\begin{center}
    	% MAKE SURE YOU TAKE OUT THE SQUARE BRACKETS
    	\clearpage
        \pagenumbering{arabic} 
        
        % \maketitle
        {\Large{\textbf{Absolute average and median treatment effects as causal estimands on metric spaces}}\medskip}
        \vskip 7mm
        {\sc Ha-Young Shin, Kyusoon Kim\footnote{Corresponding author: kyu9510@snu.ac.kr}, Kwonsang Lee, and Hee-Seok Oh}\\
{Department of Statistics, Seoul National University, Seoul 08826, Korea}
\end{center}
\vskip 5mm
     
    \noindent 
{\bf Abstract}: We define the notions of absolute average and median treatment effects as causal estimands on general metric spaces such as Riemannian manifolds, propose estimators using stratification, and prove several properties, including strong consistency. In the process, we also demonstrate the strong consistency of the weighted sample Fr\'echet means and geometric medians. Stratification allows these estimators to be utilized beyond the narrow constraints of a completely randomized experiment. After constructing confidence intervals using bootstrapping, we outline how to use the proposed estimates to test Fisher's sharp null hypothesis that the absolute average or median treatment effect is zero. Empirical evidence for the strong consistency of the estimators and the reasonable asymptotic coverage of the confidence intervals is provided through simulations in both randomized experiments and observational study settings. We also apply our methods to real data from an observational study to investigate the causal relationship between Alzheimer's disease and the shape of the corpus callosum, rejecting the aforementioned null hypotheses in cases where conventional Euclidean methods fail to do so. Our proposed methods are more generally applicable than past studies in dealing with general metric spaces. 
\vskip 5mm
\noindent {\it \bf Keywords}: Causal inference; Geometric statistics; Manifold statistics; Metric space; Stratification; Treatment effect.

\doublespacing
\section{Introduction}
%Causal inference is the process of using data to determine the actual, causal effect of one variable on another. For example, a medical researcher may be interested in the effects of marital status on the risk of breast cancer, or a sociologist may want to investigate the effects of divorce on children's learning. The question of causality, which is distinct from association, is of great interest in all of the sciences. Accordingly, many methods have been developed to investigate causal relationships between variables, such as the potential outcomes framework \citep{Holland1988, Rubin1974, Robins1986} and the directed acyclic graphs approach \citep{Lauritzen1979, Wermuth1983, Kiiveri1984}. Most causal inference studies have focused on cases where the outcome variable takes real values. However, the need for causal inference on diverse domains is growing as new types of non-scalar data become available.

The need for causal inference beyond the real line is growing as new types of non-scalar data become available. There have been many attempts to perform causal inference for data sets where the outcome variables lie in non-standard spaces. Several authors have conducted studies on causal inference for high-dimensional data. \cite{Schaechtle2013} presented a method called multi-dimensional causal discovery for uncovering causal relations within high-dimensional data like multivariate time series. \cite{Wahl2023} proposed a constraint-based non-parametric approach to studying the causal relationship between two vector-valued random variables from observational data. Spatial data is also a popular subject for causal inference; unique properties of spatial processes, such as spatial dependence and spatial heterogeneity, can lead to indirect effects and violations of fundamental assumptions of prior existing frameworks for causal inference, resulting in inaccurate estimation of the causal effect \citep{Akbari2023}. \cite{Delgado2015} and \cite{Zhang2019} measured indirect effects, called spatial spillover effects, and \cite{Arpino2016} applied an appropriate transformation to data to cope with the violation of the stable unit treatment value assumption. \cite{Giffin2022} proposed a method based on a generalized propensity score to deal with direct and indirect effects for spatial processes. \cite{Akbari2023} provided an overview of various methods in spatial causal inference. Other researchers have expanded the realm of causal inference into non-Euclidean spaces.  \cite{Wein2021} presented a data-driven model using a graph neural network framework to infer causal dependencies in brain networks. \cite{Lin2023} suggested a causal inference framework for distribution functions that reside in a metric space called Wasserstein space. \cite{Ogburn2022} illustrated a semiparametric estimation of causal effects for data from a single social network. %Research into causal inference methods for various types of non-standard data continues.

These past studies have focused on specific data types like high-dimensional Euclidean data, networks, and distributions. This study generalizes beyond this past research, dealing with data in metric spaces and geodesic spaces, including Riemannian manifolds. There has been research that has briefly dealt with two-sample inference on Riemannian manifolds, such as \cite{Hendriks1998}, \cite{Bhattacharya2005} and \cite{Bhattacharya2012}; two-sample testing can be useful as a tool in causal inference, but only under highly limited conditions such as completely randomized experiments.

We define new causal estimands on metric spaces, called the absolute average treatment effect ($\textsc{aate}$) and the absolute median treatment effect ($\textsc{amte}$). We propose estimators for $\textsc{aate}$ and $\textsc{amte}$ and demonstrate that on complete connected Riemannian manifolds, these estimators can be defined using simple geodesic regression. Crucially, these estimators employ stratification and so can be used far more broadly than the two-sample inference mentioned in the previous paragraph. We investigate the strong consistency of the estimators, and of weighted sample Fr\'echet means and geometric medians, under certain conditions, and suggest testing procedures and bootstrap confidence intervals. We conduct simulation studies to examine the finite-sample performance for consistency of the estimators and the confidence interval coverage in both randomized experiments and observational study designs. We explore the causal relationship between Alzheimer’s disease and the shape of the corpus callosum, which lies on a Riemannian manifold called Kendall’s shape space, with our estimators and confidence intervals, using matched data from an observational study, and observe that our methods seem to work significantly better than conventional Euclidean methods. The data and R code used in the experiments are available at \url{https://github.com/qsoon/AATE-AMTE}.

% The remainder of the paper is organized in the following way. \Cref{background} provides the required knowledge of causal inference, metric spaces, geodesic spaces, and geodesic regression. \Cref{ci} introduces the definitions of the absolute average and median treatment effects and their estimators. Some of their uses and theoretical properties are also illustrated. \Cref{exp} presents numerical experiments, including simulation studies and real data analysis of the shape of the corpus callosum. A summary and several concluding remarks are given in \Cref{conclusion}. Appendix \ref{appendix} contains proofs of the theorems in \Cref{ci}. The data and R code used in the experiments are available at \url{https://github.com/qsoon/AATE-AMTE}.
        
%In section \ref{geom}, we introduce some geometric concepts, including geodesic regression, that are necessary to understand this paper. Section \ref{ci} provides a definition of the absolute average treatment effect ($\textsc{aate}$) and absolute  median treatment effect ($\textsc{amte}$) for metric space-valued outcomes, and investigates several estimators and their properties in a stratified randomized experiment on metric spaces of varying degrees of generality. In section \ref{exp}, we study the causal relationship between Alzheimer's disease and the shape of the corpus callosum by using these estimators as test statistics for Fisher's null hypothesis.

\section{Background}
\subsection{Causal inference} \label{causal}
        
        \subsubsection{The potential outcomes framework and causal effects}
        The potential outcomes framework is the most widely used framework for causal inference in statistics and social and biomedical sciences. Given sets $\mathcal{T}$ and $\mathcal{O}$, suppose we have a treatment variable $z$ that takes values in $\mathcal{T}$ and an outcome that takes values in $\mathcal{O}$. In the potential outcomes framework, associated with a unit is a map $r:\mathcal{T}\rightarrow\mathcal{O}$ that describes all of the \textit{potential outcomes} for that unit; that is, the outcome will be recorded at $r(z)\in\mathcal{O}$ if the unit receives treatment $z\in\mathcal{T}$. Then, the observed outcome for this unit is $r(z)$, where $z$ is the treatment received. Even though $r(z)$ depends on $z$, the map $r$ itself, that is, the list of all potential outcomes, does not. However, we can only observe one of the potential outcomes, $r(z)$; all other potential outcomes are counterfactual and hence unobservable. This is the so-called \textit{fundamental problem of causal inference} (\citealp{Holland1986}); this problem can be somewhat mitigated through randomization. In this paper, we only consider binary treatment variables $z$ for which $\mathcal{T}=\{0,1\}$; $z=0$ means that the unit was in the control group and received no treatment, and $z=1$ means that the unit was treated. We then define $r_T=r(1)$ and $r_C=r(0)$, the two potential outcomes.
        
        %The foundation of the potential outcome framework is that causality is related to a treatment applied to a unit. Under this framework, even the same object is considered a different unit if observed at different times. Accordingly, to study the causal effect of a treatment, it is necessary to compare the observed outcome and counterfactual outcomes from the same unit where counterfactual outcome is the outcome if the unit had taken another treatment. However, we can only observe one of these outcomes, which is the ``fundamental problem of causal inference" (\citealp{Holland1986}). For this reason, we refer to these outcomes as \textit{potential outcomes}. In this paper, we only consider binary treatment variables $z$, ``treatment" (under treatment) and ``control" (with no treatment).  
        
        In this section, we deal with the traditional case in which an outcome variable is real-valued. As described above, for each level $z$ of treatment $Z$, there is a potential outcome, $r(z)$. Then, $r(z)$ equals $zr_T + (1-z)r_C$. The overall treatment effect, also known as the causal effect, is usually defined as the difference between some statistical functional of the distributions of $r_T$ and $r_C$. Because causal estimands defined in this way compare two potential populations, one that has received the treatment and one that has not, they can be interpreted at the population level. Population-level interpretations are relevant for public policy as they describe the effect of a treatment on an entire population. The best-known measure of overall treatment effect is the average treatment effect ($\textsc{ate}$), 
    \begin{equation*}
\tau=E(r_T)-E(r_C).
    \end{equation*}
   In addition, there is the quantile treatment effect ($\textsc{qte}$), $\eta_\delta=F_{r_T}^{-1}(\delta)-F_{r_C}^{-1}(\delta)$ for $\delta\in(0,1)$, where $F_Y^{-1}(\delta)=\inf\{y:\text{pr}(Y\leq y)\geq\delta\}$ is the $\delta$th quantile of random variable $Y$. In particular, we can define the median treatment effect by letting $\delta=0.5$ as
    \begin{equation*}
    \eta_{0.5}=F_{r_T}^{-1}(0.5)-F_{r_C}^{-1}(0.5).
    \end{equation*}
    Defining the individual treatment effect for a unit as $\textsc{ite}=r_T-r_C$ yields $\tau=E(\textsc{ite})$. Therefore, $\textsc{ate}$ permits both an individual-level interpretation as the average individual treatment effect and the population-level interpretation as the difference between the averages of two hypothetical populations. In contrast, the difference between the $\delta$th quantiles of the treated and control populations is generally not the $\delta$th quantile of the individual treatment effects (\citealp{Imbens2015}); that is, in general, $F_{r_T-r_C}^{-1}(\delta)\neq F_{r_T}^{-1}(\delta)-F_{r_C}^{-1}(\delta)$. Therefore, only a population-level interpretation is possible for $\textsc{qte}$. 

    \subsubsection{Assignment mechanism} \label{confounder}
    
       Data used in causal inference is typically collected through a randomized experiment or an observational study. In a randomized experiment, the researchers decide which units receive treatment; that is, they control the so-called assignment mechanism. Suppose we have $N$ units and for $i=1,\ldots,N$, let $z_i\in\{0,1\}$, $r_{Ti}\in M$, $r_{Ci}\in M$, and $x_i\in\mathbb{R}^k$ be instances of the treatment variable, the potential treatment and control outcomes, and a vector of $k$ covariates, respectively, where potential outcomes take values in some space $M$. These can be collected into ordered $N$-tuples $Z_N=(z_1,\ldots,z_N)$, $R_{TN}=(r_{T1},\ldots,r_{TN})$, $R_{CN}=(r_{C1},\ldots,r_{CN})$, and $X_N=(x_1,\ldots,x_N)$, respectively. Then, the \textit{assignment mechanism} is simply 
    \begin{equation*}
    \text{pr}(Z_N=Z \mid R_{TN},R_{CN}, X_N)
    \end{equation*}
    for all $Z\in\{0,1\}^N$.
    
    The unit-level assignment probability for unit $i$ is 
    \begin{equation*}
    \text{pr}(z_i=1 \mid R_{TN},R_{CN}, X_N)=\sum_{Z:z_i=1}\text{pr}(Z_N=Z \mid R_{TN},R_{CN}, X_N).
    \end{equation*}
    An assignment mechanism is said to be \textit{probabilistic} if $0<\text{pr}(z_i=1 \mid R_{TN},R_{CN}, X_N)<1$ for all $i=1,\ldots,N$, and \textit{individualistic} if the unit-level assignment probability depends only on $x_i$, $r_{Ti}$, and $r_{Ci}$: $\text{pr}(z_i=1 \mid R_{TN},R_{CN}, X_N)=q(x_i,r_{Ti},r_{Ci})$ for some function $q$. An assignment mechanism is said to be \textit{unconfounded} if it does not depend on the potential outcomes: $\text{pr}(Z_N=Z \mid R_{TN},R_{CN}, X_N)=\text{pr}(Z_N=Z \mid R_{TN}',R_{CN}', X_N)$ for all $Z\in\{0,1\}^N$, $R_{TN}\in M^N$, $R_{CN}\in M^N$, $R_{CN}'\in M^N$, $R_{CN}'\in M^N$, and $X_N\in(\mathbb{R}^k)^N$. In this case, $R_{TN}$ and $R_{CN}$ can be dropped from the notation, and the assignment mechanism can be written as $\text{pr}(Z_N \mid X_N)$. A \textit{randomized experiment} is one in which the assignment mechanism is known and controlled by the researchers and is probabilistic. In a \textit{classical randomized experiment}, the assignment mechanism is also individualistic and unconfounded.

    The problem with randomized experiments is that they are often expensive, time-consuming, and, in some cases, unethical. On the other hand, observational studies are usually simpler, faster, and less expensive, so observational data is commonly used in causal inference. Since the researchers have no control over treatment assignment in an observational study, several assumptions are needed for causal inference. Most importantly, the study must be \textit{free of hidden bias}, meaning that all confounders, which are pre-treatment variables that may affect the treatment variable and/or the outcome, are contained in the covariates. The existence of unmeasured confounders can cause serious problems in treatment effect estimation, like spurious effects, and counterfactual outcome estimation, like selection bias. \textit{Spurious effects} occur when estimates for the causal estimands include not only the effects of treatment but also the effects of confounders on the outcome, and \textit{selection bias} occurs when the distributions of the covariates in the observed and interested groups are different, which makes counterfactual outcome estimation difficult (\citealp{Yao2021}).
    
    More precisely, the necessary assumptions in an observational study are the same as those in a classical randomized experiment, namely that the assignment mechanism is individualistic, probabilistic, and unconfounded, except that the assignment mechanism is not controlled by the researchers and may be unknown. Assuming the $(r_{Ti},r_{Ci},x_i)$ are independent,
    \begin{equation} \label{unconfounded}
    Z_N\indep(R_{TN},R_{CN}) \mid X_N
    \end{equation}
    in observational studies of this type. The condition (\ref{unconfounded}) also holds in classical randomized experiments.
        %\begin{assumption} \label{sutva}
        %(SUTVA). The potential outcomes for any unit do not vary with the treatments assigned to other units, and, for each unit, there are no different forms or versions of each treatment level, which lead to different potential outcomes.    
        %\end{assumption} 
        %\begin{assumption} \label{ignorability}
        %(Ignorability). Given the covariates $X$, the treatment assignment $Z$ does not depend on the potential outcomes.
        
        %\end{assumption}
        %\begin{assumption} \label{positivity}
        %(Positivity). The probability of assignment to treatment for unit $i$ is strictly between zero and one for any values of covariates and potential outcomes.

        %\subsubsection{Confounders}
        %When estimating the causal estimands such as $\textsc{ate}$, adjusting confounders is very important. Here, confounders represent the pre-treatment variables that affect both the treatment assignment and the outcome. The existence of confounders causes a serious problem in the treatment effect estimation (e.g., spurious effect) and in counterfactual outcome estimation (e.g., selection bias). Specifically, \textit{spurious effect} refers to the case where the estimates for the causal estimands include not only the effects of treatment on the outcome but also the effects of confounders on the outcome. \textit{Selection bias} is the phenomenon that the distributions of the covariates in the observed group and interested group are different, which makes difficult to estimate counterfactual outcome (\citealp{Yao2021}).
        
        %\end{assumption}

        \subsubsection{Matching} \label{matchingmethod}

        With observational data, we may wish to approximate a classical stratified randomized experiment by stratifying the data based on their covariates. An \textit{exact stratification} places two units $i_1$ and $i_2$ in the same stratum if their covariates are equal, i.e., $x_{i_1}=x_{i_2}$. This is useful because of the assumption in (\ref{unconfounded}). It can be shown that $z_i$ and $(r_{Ti},r_{Ci})$ are independent given the so-called propensity score: $e(x)=\text{pr}(z_i=1 \mid x_i=x)$, and a stratification that places two units $i_1$ and $i_2$ in the same stratum if $e(x_{i_1})=e(x_{i_2})$ is called an \textit{exact matching}. An exact matching has the same advantages as an exact stratification. Of course, $e(x)$ is not known and needs to be approximated. In practice, exact stratification/matching is not feasible for non-categorical covariates or even a large number of categorical covariates. There are many matching methods used to approximate an exact matching, including propensity score matching, propensity score caliper matching, and HSIC-NNM (\citealp{Chang2017}). Matching methods can also be classified in terms of the number and/or ratio of treatment to control units in each matched set: pair matching, ratio matching, and full matching according to the ratio of treated and control units in each stratum.

        %Matching is one of the methods to balance the distribution of covariates in the treated and control groups so that enable us to approximate a stratified randomized experiment results and reduce the effects of confounding to figure out the causal effect from observational data. It can be understood as a special form of stratification with constraints on the number of observed treated and control units in each stratum. Within the matched sample, we can mimic the analysis of a completely randomized experiment. 
        
        %To construct matched sets, a distance metric is required to define close the units are to each other. Depending on the distance metric, there exist various matching methods including propensity score matching, propensity score caliper matching and HSIC-NNM (\citealp{Chang2017}). It is also classified in various ways, such as pair matching, ratio matching, and full matching according to the ratio of treated and control units in each stratum.

\subsection{Geometric preliminaries} \label{geom}
       Consider a metric space $(M,d)$, which we refer to as $M$ without explicitly mentioning $d$, as a measurable space $(M,\mathcal{B})$, where $\mathcal{B}$ is the Borel $\sigma$-algebra induced by the metric $d$, along with a probability space $(\Omega,\mathcal{F},\text{pr})$ and an $M$-valued random element $y_0:(\Omega,\mathcal{F},\text{pr})\rightarrow(M,\mathcal{B})$. For $\alpha>0$, the $L_\alpha$ estimator set of $y_0$ is defined as the set of minimizers of
       \begin{equation} \label{frechet}
        f_{\alpha}(p)=E(d(p,y_0)^\alpha).
       \end{equation}
       When $\alpha=2$, this set is also called the Fr\'echet mean set, and when $\alpha=1$, the geometric median set. If the set contains exactly one element, that element is called the $L_\alpha$ estimator, or Fr\'echet mean or geometric median as appropriate.       
       
       A proper metric space is one whose subsets are compact if and only if they are closed and bounded, or equivalently, all closed balls are compact.

     \section{Absolute average and median treatment effects} \label{ci}

    \subsection{Definitions}

    %In traditional causal inference, a real outcome variable $r(z)=zr_T+(1-z)r_C$ takes one of two potential values, $r_C$ or $r_T$, depending on whether the unit is in the treatment ($z=1$) or control ($z=0$) group, respectively. Then the overall treatment effect is usually defined as the difference between some statistical functional of the distributions of $r_T$ and $r_C$. Best known is the average treatment effect ($\textsc{ate}$)
    %\begin{equation*}
%\tau=E(r_T)-E(r_C),
    %\end{equation*}
    %but there is also the quantile treatment effect ($\textsc{qte}$)
    %\begin{equation} \label{eta}
    %\eta_\alpha=F_{r_T}^{-1}(\alpha)-F_{r_C}^{-1}(\alpha)
    %\end{equation}
    %for $\alpha\in(0,1)$, where $F_Y^{-1}(\alpha)=\inf\{y:\text{pr}(y\leq y)\geq\alpha\}$ is the $\alpha$th quantile of the random variable $y$. In particular, we may call $\eta_{0.5}$ the median treatment effect ($\textsc{mte}$). 
        Since general metric spaces lack some of the properties we take for granted in Euclidean space, problems arise when defining standard measures of causal effects, like the average and median treatment effects in such spaces. 
        % For an overview of traditional causal inference, see Section 1 in the Supplementary Material. 
        We propose the following definitions. Here, $M$ is a metric space, $z$ is a binary treatment variable, and $r_T$ and $r_C$ are $M$-valued random elements.

    \begin{definition} 
    If $r_T$  and $r_C$ have unique Fr\'echet means $\mu_{2T}$ and $\mu_{2C}$, respectively, the absolute average treatment effect is defined to be $\textsc{aate}=d(\mu_{2C},\mu_{2T}).$ If they have unique geometric medians $\mu_{1T}$ and $\mu_{1C}$, respectively, the absolute median treatment effect is defined to be $\textsc{amte}=d(\mu_{1C},\mu_{1T})$.
    \end{definition}

%\begin{remark}
%    Our definitions of $\textsc{aate}$ and $\textsc{amte}$ require the uniqueness of Fr\'echet means and geometric medians, respectively. When $M=\mathbb{R}$ and the uniqueness conditions are met, which may not be for the medians, the above definitions clearly correspond to the absolute values of  $\textsc{ate}$ and $\textsc{mte}$. The following example illustrates why we have not defined $\textsc{aate}$ and $\textsc{amte}$ when the uniqueness conditions are not satisfied. %The cases of $\textsc{amte}$ and $\textsc{aate}$ are not exactly the same, since $\textsc{mte}$ can still be well defined by (\ref{eta}) even though the median of a real random variable may not be unique. 
%    \end{remark}
   
%    \begin{example} % 이 예 뺄까?
%    Let $M$ be $S^1=\{(x,y):x^2+y^2=1\}$ and denote the north pole $(0,1)$ by $p$. Suppose that $\text{pr}((r_T,r_C)=(p,p))=\text{pr}((r_T,r_C)=(-p,-p))=1/2$. Then the Fr\'echet mean (and geometric median) sets of both $r_T$ and $r_C$ are $\{(1,0),(-1,0)\}$. Based on the distribution of $(r_T,r_C)$, we instinctively expect $\textsc{aate}$ and $\textsc{amte}$  to be zero. Suppose that $\text{pr}((r_T^*,r_C^*)=(p,-p))=\text{pr}((r_T^*,r_C^*)=(-p,p))=1/2$. Now, $\textsc{aate}$ and $\textsc{amte}$ should be equal to $\pi$, but $r_T^*$ and $r_C^*$ have the same marginal distributions as $r_T$ and $r_C$, respectively. Since $\textsc{aate}$ and $\textsc{amte}$ are defined using statistical functionals of these marginal distributions, there is no reasonable way to define these two quantities for $(r_T,r_C)$ or $(r_T^*,r_C^*)$.
%    \end{example}
    
    \cite{Lin2023} briefly considered $\textsc{aate}$ as a causal estimand in the case of Wasserstein space. When $M$ is a complete connected Riemannian manifold, \cite{Karcher1977}, \cite{Bhattacharya2003}, \cite{Fletcher2009}, and \cite{Yang2010} described conditions under which the Fr\'echet mean and geometric median uniquely exist.

    Because $\textsc{aate}$ and $\textsc{amte}$ compare two potential populations, one that has received the treatment and one that has not, they can be interpreted at the population level, like the quantile treatment effects in traditional causal inference. Population-level interpretations are relevant to public policy as they describe the effect of a treatment on the entire population.

    \subsection{Estimators for \textsc{AATE} and \textsc{AMTE}}
    
         Let $M$ be a metric space and $R_N(Z_N)=(r_1(z_1),\ldots,r_N(z_N))\in M^N$ consist of observed outcome variables with an associated vector of binary treatment variables $Z_N=(z_1,\ldots,z_N)\in\{0,1\}^N$ in a completely randomized experiment. Any term subscripted by $N$ is random in this paper. From here on, we suppress the dependence on $Z_N$ in the notation $R_N$ unless necessary. Then, the obvious estimator for $\textsc{aate}$ is the distance between the sample Fr\'echet means of the treated and control $r_i$. Building on this idea, we consider data with $\Xi$ strata. We now have an additional vector of observations $S_N=(s_1,\ldots,s_N)\in\{1,\ldots,\Xi\}^N$ which records stratum membership and stratum-wise weights $\hat{\Lambda}_N=(\hat{\lambda}_N^1,\ldots,\hat{\lambda}_N^\Xi)$ that satisfy $\sum_{s=1}^\Xi\hat{\lambda}_N^s=1$. Denote the number of treated and control units in stratum $s$ by $m_{TN}^s=\sum_{k=1}^Nz_k I\{s_k=s\}$ and $m_{CN}^s=\sum_{k=1}^N(1-z_k) I\{s_k=s\}$, respectively, where $I$ is an indicator function. Then,
    \begin{equation} \label{tl21}
    T_{\alpha}(Z_N,R_N,S_N)=\inf_{\bar{r}_{T}\in \bar{A}_{TN}^\alpha,\bar{r}_{C}\in\bar{A}_{CN}^\alpha}d(\bar{r}_{C},\bar{r}_{T}),~~ \quad (\alpha=1,2),
    \end{equation} 
    where $\bar{A}_{TN}^\alpha$ is the weighted sample $L_\alpha$ estimator set of $r_i$ with associated weights $w_{TiN}=\sum_{s=1}^\Xi\hat{\lambda}_N^sz_iI\{s_i=s\}/m_{TN}^s, (i=1,\ldots,N)$, and $\bar{A}_{CN}^\alpha$ is that with associated weights $w_{CiN}=\sum_{s=1}^\Xi\hat{\lambda}_N^s(1-z_i)I\{s_i=s\}/m_{CN}^s,  (i=1,\ldots,N)$. The proposed $T_{\alpha}$ is a possible estimator for $\textsc{aate}$ when $\alpha=2$ and $\textsc{amte}$ when $\alpha=1$. In both treatment and control cases, the sum of the weights of units within stratum $s$ is $\hat{\lambda}_N^s$, so the weights add up to 1. This stratification and weighting makes $T_{\alpha}$ useful in stratified randomized experiments and matched observation studies, unlike two-sample inference. We remark that if $M=\mathbb{R}$, the proposed estimator $T_{2}$ becomes $\lvert\sum_{s=1}^\Xi\hat{\lambda}_N^s(\bar{r}_{T}^s-\bar{r}_{C}^s)\rvert$, where $\bar{r}_{T}^s$ and $\bar{r}_{C}^s$ are the sample means of the treated and control outcomes in stratum $s$, respectively. In other words, in the real case, $T_{2}$ is the standard estimator for the absolute value of $\textsc{ate}$ in a classical randomized experiment with strata. 

    \begin{remark} \label{unique}
        The infimum in (\ref{tl21}) ensures that $T_{\alpha}$ is well-defined even when $\bar{A}_{CN}^\alpha$ or $\bar{A}_{TN}^\alpha$ contains more than one element. However in practice, outside of degenerate cases, the weighted sample Fr\'echet mean and geometric median sets are usually singletons. In particular, uniqueness is guaranteed on Hadamard manifolds (that is, complete, simply connected Riemannian manifolds of non-positive sectional curvature) such as hyperbolic spaces in all cases for the sample Fr\'echet mean and as long as the data are not confined to a single geodesic for the sample geometric median. See \cite{Karcher1977}, \cite{Bhattacharya2003}, \cite{Fletcher2009}, and \cite{Yang2010} for more sufficient, but not necessary, conditions under which uniqueness holds.
    \end{remark}

When $M=\mathbb{R}$ and the number of strata $\Xi=1$, it is known that $T_{2}$ is identical to $\lvert \hat{\beta}_1\rvert$, where $(\hat{\beta}_0,\hat{\beta}_1)$ is the least squares estimate for $(\beta_0,\beta_1)$ in the simple linear regression model $r_i=\beta_0+\beta_1z_i$ where the sole independent variable is treatment. Thus, even though the coefficients of linear regression ordinarily measure only association, and not causality, between the dependent and independent variables, here $\lvert\hat{\beta}_1\rvert$ also has a causal interpretation because it is identical to $T_{\alpha}$; this is true regardless of the validity of the regression model. Using the simple geodesic regression model for Riemannian manifold-valued dependent variables introduced by \cite{Fletcher2013}, one can show that a similar result, generalized to accommodate stratification, holds when $M$ is a complete connected Riemannian manifold, as detailed below.

Let $M$ be a complete connected Riemannain manifold with the same classical randomized experimental setup as in the general metric case, and associate a single weight
    \begin{equation*}
    W_{iN}=\beta_T\sum_{s=1}^\Xi\frac{\hat{\lambda}_N^sz_iI\{s_i=s\}}{m_{TN}^s}+\beta_C\sum_{s=1}^\Xi\frac{\hat{\lambda}_N^s(1-z_i)I\{s_i=s\}}{m_{CN}^s},~~\quad i=1,\ldots,N
    \end{equation*}
     to each data point, where $\beta_T,\beta_C\in(0,1)$ and $\beta_T+\beta_C=1$ to ensure that the weights add to 1. Let $\bar{H}^\alpha$ be the weighted simple geodesic regression $L_\alpha$ estimator set of the points $(z_i,r_i)\in\mathbb{R}\times M$, $i=1,\ldots,N$ with weights $W_{iN}$; see Section A.1.1 of the Supplementary Materials for an overview of geodesic regression on Riemannian manifolds. The elements of $\bar{H}^\alpha$ are of the form $(p,v)$, where $p\in M, v\in T_pM$.

        \begin{theorem} \label{l2equiv}
Let $M$ be a complete connected Riemannian manifold and $\alpha=1$ or $2$. Then, $\inf_{(\bar{p},\bar{v})\in\bar{H}^\alpha}\lVert \bar{v}\rVert=T_{\alpha}(Z_N,R_N,S_N)$ and is invariant with respect to $\beta_T$ and $\beta_C$.

    \end{theorem}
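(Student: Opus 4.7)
The plan is to exploit the binary structure of $z_i \in \{0,1\}$ to decouple the weighted simple geodesic regression objective into two independent minimizations over $M$, one identifying the base point $\bar{p}$ with the weighted sample control $L_\alpha$ estimator and the other identifying $\mathrm{Exp}_{\bar{p}}(\bar{v})$ with the weighted sample treated $L_\alpha$ estimator, and then to convert the infimum of $\lVert\bar{v}\rVert$ over $\bar{H}^\alpha$ into the infimum of $d(\bar{p},\bar{q})$ over $\bar{A}_{CN}^\alpha \times \bar{A}_{TN}^\alpha$ via Hopf--Rinow.

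The first step is to write out the weighted simple geodesic regression objective
\[
F(p, v) = \sum_{i=1}^N W_{iN}\, d\bigl(\mathrm{Exp}_p(z_i v), r_i\bigr)^\alpha,
\]
whose minimizers over $(p, v) \in TM$ comprise $\bar{H}^\alpha$. Because $z_i \in \{0,1\}$, the quantity $\mathrm{Exp}_p(z_i v)$ equals either $p$ or $\mathrm{Exp}_p(v) =: q$. Combining this with the observations that $w_{CiN} = 0$ whenever $z_i = 1$, that $w_{TiN} = 0$ whenever $z_i = 0$, and that $W_{iN} = \beta_T w_{TiN} + \beta_C w_{CiN}$, the objective splits cleanly as
\[
F(p, v) = \beta_C \sum_{i=1}^N w_{CiN}\, d(p, r_i)^\alpha + \beta_T \sum_{i=1}^N w_{TiN}\, d(q, r_i)^\alpha.
\]

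Next, since $M$ is complete and connected, Hopf--Rinow ensures that $\mathrm{Exp}_p : T_pM \to M$ is surjective for every $p$, so $(p, q)$ ranges freely over $M \times M$ as $(p, v)$ ranges over $TM$. Because $\beta_T, \beta_C > 0$, a pair $(\bar{p}, \bar{v})$ therefore lies in $\bar{H}^\alpha$ if and only if $\bar{p} \in \bar{A}_{CN}^\alpha$ and $\mathrm{Exp}_{\bar{p}}(\bar{v}) \in \bar{A}_{TN}^\alpha$, independently of the specific positive values of $\beta_T$ and $\beta_C$; this already gives the claimed invariance. Finally, for any $v \in T_p M$ one has $\lVert v \rVert \geq d(p, \mathrm{Exp}_p(v))$, with equality attained by choosing $\bar{v}$ to be the initial velocity of a minimizing geodesic from $\bar{p}$ to $\bar{q}$, which exists again by Hopf--Rinow. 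Taking the infimum yields
\[
\inf_{(\bar{p}, \bar{v}) \in \bar{H}^\alpha} \lVert \bar{v} \rVert = \inf_{\bar{p} \in \bar{A}_{CN}^\alpha,\, \bar{q} \in \bar{A}_{TN}^\alpha} d(\bar{p}, \bar{q}) = T_\alpha(Z_N, R_N, S_N),
\]
as required. The only genuinely delicate point is verifying the decoupling: one must check that the coupling between $p$ and $v$ inside $\mathrm{Exp}_p(v)$ introduces no hidden constraint, but since the two summands above involve disjoint index sets and the change of variables $(p,v) \mapsto (p, \mathrm{Exp}_p(v))$ is unconstrained on $M \times M$, the separation goes through and the rest of the argument is elementary.
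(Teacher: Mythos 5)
Your proof is correct and follows essentially the same route as the paper's: decompose the objective using the binary structure of $z_i$ and the disjoint supports of $w_{TiN}$ and $w_{CiN}$, characterize $\bar{H}^\alpha$ as the set of pairs with $\bar{p}\in\bar{A}_{CN}^\alpha$ and $\exp_{\bar{p}}(\bar{v})\in\bar{A}_{TN}^\alpha$ (which gives the $\beta_T,\beta_C$-invariance), and then pass between $\lVert\bar{v}\rVert$ and $d(\bar{p},\exp_{\bar{p}}(\bar{v}))$ via minimizing geodesics guaranteed by Hopf--Rinow. The only cosmetic difference is that the paper proves the statement for general geodesic spaces (with geodesic speed $u_{\bar{\gamma}_p}$ in place of $\lVert\bar{v}\rVert$) and specializes, whereas you argue directly on the manifold.
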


In fact, it is possible to extend this theorem to so-called geodesic spaces by making a straightforward generalization, presented in Section A.1.2 of the Supplementary Materials, of the geodesic regression model of \cite{Fletcher2013} from complete connected Riemannian manifolds to geodesic spaces.
% See Section \ref{appenl2equiv} of the Supplementary Materials for the proof of this extended result,
See Section B.1 of the Supplementary Materials for a proof of this extended result, Theorem \ref{l2equiv}. Thus, in the case of complete connected Riemannian manifolds, one can equivalently define $T_{\alpha}$ in terms of simple geodesic regression, giving a formal justification for a causal interpretation of geodesic regression in this context.

    \subsection{Strong consistency of the estimators}
To prove the strong consistency of our proposed estimators, we first need the following result demonstrating the strong consistency of measurable selections from what we dub stratification-weighted sample $L_\alpha$ ($\alpha=1,2$) estimator sets on proper metric spaces. Strictly speaking, this result is not inherently related to causal inference, but it is a result that can be of independent interest.
    
    \begin{theorem} \label{l2gzation}

    Let $(y_0,s_0),(y_1,s_1),(y_2,s_2),\ldots:\Omega\rightarrow M\times\{1,\ldots,\Xi\}$ be independent but not necessarily identically distributed random elements such that $y_0 \mid (s_0=s),y_1 \mid (s_1=s),y_2 \mid (s_2=s),\ldots$ are identically distributed for $s=1,\ldots,\Xi$. Suppose that for $s=1,\ldots,\Xi$, $\hat{\lambda}_N^s$ defined on $\Omega$ converges almost surely to $\text{pr}(s_0=s)\in(0,1)$, and $m_N^s\rightarrow\infty$ almost surely, where $m_N^s=\sum_{k=1}^NI\{s_k=s\}$. Let $\alpha\in\{1,2\}$.    
     %Let $(Y,S),(y_1,s_1),(Y_2,S_2),\ldots:\Omega\rightarrow M\times\{1,\ldots,\Xi\}$ be independent and identically distributed random elements from a probability space $(\Omega,\mathcal{F},\text{pr})$ into the Cartesian product of a proper metric space $(M,d)$ with the set $\{1,\ldots,\Xi\}$, equipped with the product $\sigma$-algebra induced by the Borel $\sigma$-algebra $\mathcal{B}$ of $(M,d)$ and the discrete $\sigma$-algebra on $\{1,\ldots,\Xi\}$. Suppose $F$ in (\ref{frechet}) is finite for some point in $M$. Suppose $\hat{\Lambda}_N=(\hat{\lambda}_N^1,\ldots,\hat{\lambda}_N^\Xi)$, also defined on $(\Omega,\mathcal{F},\text{pr})$, converges almost surely to $(\text{pr}(S=1),\ldots,\text{pr}(S=\Xi))$, where $0<\text{pr}(s_0=s)<1$ for each $s=1,\ldots,\Xi$.
    If $y_0$ has a unique $L_\alpha$ estimator, then every measurable selection from the weighted sample $L_\alpha$ estimator set of $y_1,\ldots,y_N$ with weights 
    \begin{equation} \label{win}
    w_{iN}=\sum_{s=1}^\Xi\frac{\hat{\lambda}_N^sI\{s_i=s\}}{m_N^s} \quad(i=1,\ldots,N),
    \end{equation}
    which we call a stratification-weighted sample $L_\alpha$ estimator set, converges to the $L_\alpha$ estimator of $y_0$.
    
    \end{theorem}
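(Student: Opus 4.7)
The plan is to adapt a standard argmin-theorem consistency argument to the stratified, weighted setting. Write the empirical criterion as
\[
\hat{f}_{\alpha,N}(p) = \sum_{i=1}^N w_{iN}\, d(p, y_i)^\alpha = \sum_{s=1}^\Xi \hat{\lambda}_N^s \cdot \frac{1}{m_N^s} \sum_{i:\, s_i = s} d(p, y_i)^\alpha,
\]
let $\mu_\alpha$ denote the assumed-unique $L_\alpha$ estimator of $y_0$, and note that $f_\alpha(\mu_\alpha) < \infty$ forces $f_\alpha(p) < \infty$ at every $p \in M$ by the triangle inequality. I first establish pointwise a.s.\ convergence $\hat{f}_{\alpha,N}(p) \to f_\alpha(p)$: by hypothesis, the $y_i$ with $s_i = s$ are independent copies of $y_0 \mid (s_0 = s)$, so the strong law of large numbers together with $m_N^s \to \infty$ a.s.\ gives $(m_N^s)^{-1} \sum_{i:\, s_i = s} d(p, y_i)^\alpha \to E[d(p, y_0)^\alpha \mid s_0 = s]$ a.s., and combining with $\hat{\lambda}_N^s \to \text{pr}(s_0 = s)$ yields the claim. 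A proper metric space is $\sigma$-compact hence separable, so simultaneous pointwise convergence holds on a countable dense subset $\{p_j\} \subset M$ on a single event of full probability.

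Next I upgrade this to uniform convergence on compact subsets of $M$ via equicontinuity. Using $|d(p, y)^\alpha - d(p', y)^\alpha| \le d(p, p')\,(d(p, y) + d(p', y))^{\alpha - 1}$ (trivial for $\alpha = 1$, a direct factoring for $\alpha = 2$) and $d(p, y) \le d(p, p_0) + d(p_0, y)$ for a fixed reference point $p_0$, weighted averaging bounds the modulus of continuity of $\hat{f}_{\alpha,N}$ on any bounded set by a constant times $\hat{f}_{1,N}(p_0)$, which itself converges a.s.\ to $f_1(p_0) < \infty$ (finite by Cauchy--Schwarz from $f_2(p_0) < \infty$ when $\alpha = 2$). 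Pointwise convergence on the dense set together with this equicontinuity then yields a.s.\ uniform convergence of $\hat{f}_{\alpha,N}$ to $f_\alpha$ on every compact $K \subset M$.

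I then show that any measurable selection $\hat{p}_N$ from the weighted sample $L_\alpha$ estimator set is eventually contained in a compact set. Applying $(a+b)^\alpha \le 2^{\alpha-1}(a^\alpha + b^\alpha)$ to the triangle inequality $d(\hat{p}_N, p_0) \le d(\hat{p}_N, y_i) + d(y_i, p_0)$ and taking the weighted average in $i$ (using $\sum_i w_{iN} = 1$) yields
\[
d(\hat{p}_N, p_0)^\alpha \le 2^{\alpha-1}\bigl(\hat{f}_{\alpha,N}(\hat{p}_N) + \hat{f}_{\alpha,N}(p_0)\bigr) \le 2^\alpha\, \hat{f}_{\alpha,N}(p_0) \longrightarrow 2^\alpha\, f_\alpha(p_0) < \infty
\]
a.s., where the second inequality uses that $\hat{p}_N$ minimizes $\hat{f}_{\alpha,N}$. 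By properness of $M$, this a.s.\ bounded sequence lies eventually in a compact set. A standard subsequential argmin argument then closes the proof: any limit point $\mu^*$ of $\hat{p}_N$ satisfies $f_\alpha(\mu^*) \le f_\alpha(q)$ for every $q \in M$, obtained by passing to the limit in $\hat{f}_{\alpha,N}(\hat{p}_N) \le \hat{f}_{\alpha,N}(q)$ with uniform convergence controlling the left side on a compact set containing $\mu^*$ and pointwise convergence handling the right side. Uniqueness of $\mu_\alpha$ forces $\mu^* = \mu_\alpha$, so $\hat{p}_N \to \mu_\alpha$ a.s.

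The main technical obstacle is coordinating the various a.s.\ statements — pointwise convergence on the dense set, equicontinuity of $\hat{f}_{\alpha,N}$, and boundedness of the minimizer sequence — on a single probability-one event, since the equicontinuity constants for $\hat{f}_{\alpha,N}$ are themselves sample-dependent quantities that must be shown a.s.\ bounded. Properness, rather than mere local compactness, is the property of $M$ that makes the tightness argument in Step 3 work and turns the a.s.\ boundedness of $\hat{p}_N$ into eventual containment in a compact set.
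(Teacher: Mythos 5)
Your architecture is essentially the paper's: stratum-wise strong law of large numbers for pointwise convergence of $\hat{f}_{\alpha,N}$, uniform convergence on compacta (your dense-set-plus-equicontinuity argument is the same device as the paper's finite $\epsilon$-net), a coercivity bound confining the minimizers to a compact set, and a localization/argmin argument to conclude. Two of your choices are in fact cleaner than the paper's: the bound $d(\hat{p}_N,p_0)^\alpha\le 2^{\alpha-1}\bigl(\hat{f}_{\alpha,N}(\hat{p}_N)+\hat{f}_{\alpha,N}(p_0)\bigr)\le 2^{\alpha}\hat{f}_{\alpha,N}(p_0)$ replaces the paper's quadratic-expansion coercivity lemma, and the subsequential argmin finish replaces the paper's explicit ``no minimizer lies outside $C^\epsilon$'' construction. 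You also correctly notice that the $\alpha=2$ equicontinuity estimate requires controlling $\hat{f}_{1,N}(p_0)$, which is why the paper's pointwise-convergence lemma is stated for $\alpha\in\{1/2,1,2\}$.

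The one place you assert something that genuinely requires proof is the opening of your SLLN step: ``by hypothesis, the $y_i$ with $s_i=s$ are independent copies of $y_0\mid(s_0=s)$.'' This is not a hypothesis. The $(y_i,s_i)$ are independent but not identically distributed --- only the conditional laws $y_i\mid(s_i=s)$ agree --- and the index set $\{i:s_i=s\}$ is random, so the objects you want to apply the strong law to are the terms of a randomly indexed subsequence $y_{(1)}^s,y_{(2)}^s,\ldots$ (the $j$th observation landing in stratum $s$). One must show (i) that these are measurable random elements and (ii) that, conditionally on the event $\{m_N^s\to\infty\}$, they are i.i.d.\ with the law of $y_0\mid(s_0=s)$; the paper devotes two lemmas to exactly this, with a nontrivial computation for the joint distributions. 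Your proof is correct once this is supplied, and the remaining coordination of almost-sure events that you flag at the end is routine, but as written the SLLN step rests on an unproved identification of the conditional law of a randomly selected subsequence with the conditional law of a single observation.
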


% See Section \ref{appenl2gzation} of the Supplementary Materials for the proof.
See Section B.2 of the Supplementary Materials for the proof. By letting $\Xi=1$, Theorem \ref{l2gzation} can be seen as a generalization of results in \cite{Bhattacharya2003}, \cite{Yang2011}, \cite{Ginestet2013}, and \cite{Evans2024}.
\vspace{3mm}

    The following is the main result of this paper.

    \begin{theorem} \label{l2sc}
%Let $(r_{T0},r_{C0},s_0),(r_{T1},r_{C1},s_1),(r_{T2},r_{C2},s_2),\ldots:\Omega\rightarrow M\times M\times\{1,\ldots,\Xi\}$ be independent, but not necessarily identically distributed, such that $(r_{T0},r_{C0}) \mid (s_0=s),(r_{T1},r_{C1}) \mid (s_1=s),(r_{T2},r_{C2}) \mid (s_2=s),\ldots$ are identically distributed for $s=1,\ldots,\Xi$. Define $r_i=r_i(z_i)$ by $r_i(z_i)=r_{Ti}$ if $z_i=1$ and $r_i(z_i)=r_{Ci}$ if $z_i=0$, where $z_i\in\{0,1\}$ is a binary variable for each positive integer $i$. Assume that 
    Equip $M$ with its Borel $\sigma$-algebra, the set $\{1,\ldots,\Xi\}$ with the discrete $\sigma$-algebra, and $M\times M\times\{1,\ldots,\Xi\}$ with the induced product $\sigma$-algebra. Let $(r_{T0},r_{C0},s_0),(r_{T1},r_{C1},s_1),\ldots:\Omega\rightarrow M\times M\times\{1,\ldots,\Xi\}$ be independent, but not necessarily identically distributed, random elements such that $(r_{T0},r_{C0}) \mid (s_0=s),(r_{T1},r_{C1}) \mid (s_1=s),\ldots$ are identically distributed for $s=1,\ldots,\Xi$. Define $r_i=r_i(z_i)$ by $r_i(z_i)=r_{Ti}$ if $z_i=1$ and $r_i(z_i)=r_{Ci}$ if $z_i=0$, where $z_i\in\{0,1\}$ is a binary variable for each positive integer $i$. Assume that 
    \begin{equation} \label{unconfoundedness}
    (z_1,\ldots,z_N)\indep((r_{T1},r_{C1}),\ldots,(r_{TN},r_{CN})) \mid (s_1,\ldots,s_N)
    \end{equation}
    for all $N\in\mathbb{Z}^+$. Suppose that for $s=1,\ldots,\Xi$, $\hat{\lambda}_N^s$ defined on $(\Omega,\mathcal{F},\text{pr})$ converges almost surely to $\text{pr}(s_0=s)\in(0,1)$), and $m_{TN}^s\rightarrow\infty$ almost surely and $m_{CN}^s\rightarrow\infty$ almost surely, where $m_{TN}^s=\sum_{i=1}^Nz_iI\{s_i=s\}$ and $m_{CN}^s=\sum_{i=1}^N(1-z_i)I\{s_i=s\}$. Suppose that $E(d(p^*,r_{T0})^\alpha)$ and $E(d(q^*,r_{C0})^\alpha)$ are finite for some points in $p^*,q^*\in M$ and the $L_\alpha$ estimator sets of $r_T$ and $r_C$ are singletons. As long as $T_{\alpha}(Z_N,R_N,S_N)$ is measurable, it is a strongly consistent estimator of $\textsc{aate}$ if $\alpha=2$ and of $\textsc{amte}$ if $\alpha=1$.
    \end{theorem}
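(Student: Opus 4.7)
The plan is to reduce the statement to two applications of Theorem \ref{l2gzation}, one for the treated subsequence and one for the control subsequence, and then combine them through continuity of the metric. Concretely, I would show that any measurable selection $\bar r_{TN}$ from $\bar A_{TN}^\alpha$ converges almost surely to $\mu_{\alpha T}$, any measurable selection $\bar r_{CN}$ from $\bar A_{CN}^\alpha$ converges almost surely to $\mu_{\alpha C}$, and then upgrade this to convergence of the infimum $T_\alpha$ to $d(\mu_{\alpha C},\mu_{\alpha T})$, which equals $\textsc{aate}$ or $\textsc{amte}$ depending on $\alpha$.

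For the reduction, the delicate point is that the original per-unit triples $(r_{Ti},r_{Ci},s_i,z_i)$ are not independent across $i$, because although $(r_{Ti},r_{Ci},s_i)$ are independent in $i$ and $z_i\indep(r_{Ti},r_{Ci})\mid s_i$, the $z_i$'s themselves can be arbitrarily dependent (e.g., through matched designs). The standard way around this is to condition on $\sigma(z_1,z_2,\ldots,s_1,s_2,\ldots)$: under this conditioning, by the joint independence of $(r_{Ti},r_{Ci},s_i)$ and the unconfoundedness assumption (\ref{unconfoundedness}), the observed treated outcomes $\{r_i:z_i=1\}$ become mutually independent with $r_i$ distributed as $r_{T0}\mid (s_0=s_i)$, and similarly for controls. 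Reindexing the treated units by $i(1)<i(2)<\cdots$, the weights $w_{TiN}$ restricted to this subsequence become exactly the stratification-weighted form in (\ref{win}) with stratum labels $s_{i(j)}$, stratum counts $m_{TN}^s$, and stratum weights $\hat\lambda_N^s$. The hypotheses $\hat\lambda_N^s\to\text{pr}(s_0=s)$ and $m_{TN}^s\to\infty$ almost surely transfer through the conditioning, so Theorem \ref{l2gzation} applies and yields $\bar r_{TN}\to\mu_{\alpha T}$ almost surely under every conditioning, hence unconditionally. The same argument controls the control group.

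To promote selection-wise convergence to convergence of $T_\alpha$, I would use a measurable selection theorem (Kuratowski–Ryll-Nardzewski) together with the triangle inequality. For the upper bound, any measurable selections give $T_\alpha\le d(\bar r_{CN},\bar r_{TN})\to d(\mu_{\alpha C},\mu_{\alpha T})$ almost surely, so $\limsup T_\alpha \le d(\mu_{\alpha C},\mu_{\alpha T})$. For the lower bound, for each $N$ pick a measurable pair $(\bar r_{CN}',\bar r_{TN}')\in\bar A_{CN}^\alpha\times\bar A_{TN}^\alpha$ with $d(\bar r_{CN}',\bar r_{TN}')\le T_\alpha+1/N$ — such a pair exists since the product is a measurable closed-valued multifunction and $d$ is continuous — and then
\begin{equation*}
T_\alpha+\frac{1}{N}\ge d(\bar r_{CN}',\bar r_{TN}')\ge d(\mu_{\alpha C},\mu_{\alpha T})-d(\mu_{\alpha C},\bar r_{CN}')-d(\mu_{\alpha T},\bar r_{TN}').
\end{equation*}
Applying Theorem \ref{l2gzation} to the particular selections $\bar r_{CN}'$ and $\bar r_{TN}'$, the last two terms vanish almost surely, yielding $\liminf T_\alpha\ge d(\mu_{\alpha C},\mu_{\alpha T})$.

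The main obstacle is the reduction in the second paragraph: the hypotheses of Theorem \ref{l2gzation} explicitly require independence of the $(y_i,s_i)$, which is not directly available here because of dependence among the $z_i$'s, so getting the reduction to go through cleanly hinges on the correct use of the unconfoundedness assumption (\ref{unconfoundedness}) and on verifying that the almost-sure limiting behaviour of $\hat\lambda_N^s$ and $m_{TN}^s$ survives the conditioning. Once the reduction is in place, the remaining steps are soft applications of measurable selection and the continuity of $d$.
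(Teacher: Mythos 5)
Your proposal is correct and follows essentially the same route as the paper: both reduce the claim to two applications of Theorem \ref{l2gzation} — one to the treated and one to the control subsequence, using the unconfoundedness assumption to obtain the required independent, within-stratum identically distributed structure — and then combine the two almost-sure limits via continuity of $d$. The only differences are in execution: the paper formalizes the reduction by explicitly reindexing the treated and control units and proving independence and the correct stratum-conditional distributions of the reindexed sequences (the lemmas in Section B.3 of the supplement) rather than conditioning on $\sigma(z_1,s_1,z_2,s_2,\ldots)$, and it obtains an infimum-attaining pair directly from compactness of the $L_\alpha$ estimator sets rather than via a measurable-selection plus $1/N$-approximation argument.
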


% The proof of Theorem \ref{l2sc} is provided in Section \ref{appenl2sc} of the Supplementary Materials.
A proof of Theorem \ref{l2sc} is provided in Section B.3 of the Supplementary Materials. If $(r_{T0},r_{C0},s_0)$, $(r_{T1},r_{C1},s_1),\ldots$ are independent and identically distributed, $(r_T,r_C) \mid (s_0=s),(r_{T1},r_{C1}) \mid (s_1=s),\ldots$ are also identically distributed and we can apply the above theorem. In this case, the obvious choice for $\hat{\lambda}_N^s$ that satisfies almost sure convergence is $m_N^s/N$, the proportion of units, in stratum $s$. The best choice for $\hat{\lambda}_N^s$ is probably $\text{pr}(s_0=s)$ itself if it is known.

We now demonstrate why the more general case considered in the theorem, in which $(r_{Ti},r_{Ci},s_i)$ are not necessarily independent and identically distributed, is useful, especially if each $\text{pr}(s_0=s)$ is known. Suppose a researcher wants to ensure that the number of units in each stratum exceeds a certain minimum threshold. Units that are independent and identically distributed may not achieve this, but it is easy to select units according to the conditions in the theorem and guarantee that these thresholds are exceeded.
    
    The condition (\ref{unconfoundedness}) in Theorem \ref{l2sc} for a specific $N$ implies the analogous condition for all positive integers smaller than $N$.
    % See Lemma \ref{indp} in Section \ref{appenl2sc} of the Supplementary Material.
    See Lemma 7 in Section B.3 of the Supplementary Material. Thus, this theorem applies to unconfounded settings. Examples include classical randomized experiments and observational studies free of hidden bias with categorical covariates, where each of the finitely many possible combinations of covariates can be treated as a stratum.

     \begin{remark} \label{intuitive}
    Another intuitive estimator for $\textsc{aate}$ is
    \begin{equation} \label{wrong}
    \inf d(\bar{r}_C(\bar{r}_{C}^1,\ldots,\bar{r}_{C}^\Xi;
    \hat{\lambda}_N^1,\ldots,\hat{\lambda}_N^\Xi),\bar{r}_T(\bar{r}_{T}^1,\ldots,\bar{r}_{T}^\Xi;
    \hat{\lambda}_N^1,\ldots,\hat{\lambda}_N^\Xi)),
    \end{equation}
    where $\bar{r}_T(\bar{r}_{T}^1,\ldots,\bar{r}_{T}^\Xi; \hat{\lambda}_N^1,\ldots,\hat{\lambda}_N^\Xi)$ is a weighted sample Fr\'echet mean of $\bar{r}_{T}^1,\ldots,\bar{r}_{T}^\Xi$ with associated weights $\hat{\lambda}_N^1,\ldots,\hat{\lambda}_N^\Xi$, $\bar{r}_{T}^s$ is a sample Fr\'echet mean of the treated units in stratum $s$ for $s=1,\ldots,\Xi$, and the corresponding terms for the control group are defined analogously. The infimum is taken over all of the relevant weighted sample Fr\'echet mean sets; note a slight abuse of notation, as $\bar{r}_C(\cdot)$ and $\bar{r}_T(\cdot)$ are not necessarily functions here. This estimator is attractive because it reduces to $\textsc{aate}$ in the real case and is equivalent to $T_{2}(Z_N,R_N,S_N)$ in a completely randomized experiment, i.e., when $\Xi=1$. However, unlike $T_{2}(Z_N,R_N,S_N)$, this estimator is not necessarily consistent when $\Xi\neq 1$; see the following example.
    % see Section \ref{example} of the Supplementary Materials.
    \end{remark}

    \begin{example}
            Let $M=S^2=\{(x,y,z):x^2+y^2+z^2=1\}$, $p$ be the north pole $(0,0,1)$ and $c$ any value in $(0,\pi/2)$. Suppose 
    \begin{align*}
    &\text{pr}\Big[(r_{T0},r_{C0},s_0)=\big((\sin c,0,\cos c),(-\sin c,0,\cos c),1\big)\Big] \\
    =&\text{pr}\Bigg[(r_{T0},r_{C0},s_0)=\Bigg(\bigg(-\frac{\sin c}{2},\frac{2\sin c}{\surd 3},\cos c\bigg),\bigg(\frac{\sin c}{2},\frac{2\sin c}{\surd 3},\cos c\bigg),2\Bigg)\Bigg] \\
    =&\text{pr}\Bigg[(r_{T0},r_{C0},s_0)=\Bigg(\bigg(-\frac{\sin c}{2},-\frac{2\sin c}{\surd 3},\cos c\bigg),\bigg(\frac{\sin c}{2},-\frac{2\sin c}{\surd 3},\cos c\bigg),2\Bigg)\Bigg] \\
    =&\frac{1}{3};
    \end{align*}
    thus, $d(p,r_{T0})=d(p,r_{C0})=c$ with probability 1. By the strong consistency of the sample Fr\'echet means, all of which will always be unique for this distribution, $\bar{r}_{T}^1$ converges to $(\sin c,0,\cos c)$, the unique Fr\'echet mean of $r_{T0}$ given $s_0=1$, and $\bar{r}_{T}^2$ to the geodesic midpoint $(-\sin t,0,\cos t)$ for some $t\in(0,\pi/2)$, the unique Fr\'echet mean of $r_{T0}$ given $s_0=2$. We know that $(\sin c,0,\cos c)$, $(-\sin t,0,\cos t)$ and $p$ lie on a geodesic, and by the positive curvature of $S^2$, $t=d(p,(-\sin t,0,\cos t))>c/2$. Therefore, if $\hat{\lambda}_N^1\rightarrow \text{pr}(s_0=1)=1/3$ almost surely and $\hat{\lambda}_N^2\rightarrow \text{pr}(s_0=2)=2/3$ almost surely, then $\bar{r}(\bar{r}_{T}^1,\bar{r}_{T}^2;\hat{\lambda}_N^1,\hat{\lambda}_N^2)$ converges almost surely to $(-\sin(2t/3-c/3),0,\cos(2t/3-c/3))$. Similarly, $\bar{r}(\bar{r}_{C}^1,\bar{r}_{C}^2;\hat{\lambda}_N^1,\hat{\lambda}_N^2)\rightarrow(\sin(2t/3-c/3),0,\cos(2t/3-c/3))$ almost surely, 
    % and thus, by the continuous mapping theorem, our estimator in (\ref{wrong}) converges almost surely to
    and thus, by the continuous mapping theorem, our estimator in (6) converges almost surely to
    \begin{align*}
    &d\Bigg(\bigg(-\sin\bigg(\frac{2t}{3}-\frac{c}{3}\bigg),0,\cos\bigg(\frac{2t}{3}-\frac{c}{3}\bigg)\bigg),\bigg(\sin\bigg(\frac{2t}{3}-\frac{c}{3}\bigg),0,\cos\bigg(\frac{2t}{3}-\frac{c}{3}\bigg)\bigg)\Bigg)\\
    =&\frac{4t}{3}-\frac{2c}{3}>0.
    \end{align*}
    However, the unique Fr\'echet means of $r_{T0}$ and of $r_{C0}$ are clearly $p$, and $\textsc{aate}$ are 0. An analogously defined estimator of $\textsc{amte}$ would face the same problem.
    \end{example}

    \subsection{Confidence intervals and hypothesis testing}
    
    Define a statistical functional $H_{\alpha}$ by
    \begin{align*}
        &H_{\alpha}(F) \\
        =&\inf d\big(\arg\min_{p\in M} E_F\big[E_F\big[d(p,r_1)^\alpha\mid z_1=1,s_1\big]\big],\arg\min_{p\in M} E_F\big[E_F\big[d(p,r_1)^\alpha\mid z_1=0,s_1\big]\big]\big),
    \end{align*}
    where $E_F$ denotes the expectation when $(z_1,r_1,s_1)$ follows a distribution $F$, and the infimum is taken over all elements in the two $\arg\min$ sets. Let $\hat{F}_N$ be the empirical distribution of $(z_1,r_1,s_1),\ldots,(z_N,r_N,s_N)$.%, that is, the distribution that places mass $1/N$ at each of these observations.

    \begin{proposition} \label{bootci}
        All terms are as in Theorem \ref{l2sc}. In addition, assume that $(r_{Ti},r_{Ci},s_i)$ are independent and identically distributed samples from the distribution of $(r_{T0},r_{C0},s_0)$. For $\alpha=1,2$, $H_\alpha(F)=d(\mu_{\alpha C},\mu_{\alpha T})$ and $H_{\alpha}(\hat{F}_N)=T_{\alpha}(Z_N,R_N,S_N)$
    \end{proposition}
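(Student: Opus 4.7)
The plan is to establish both identities by direct computation: for the first, use the tower property together with the unconfoundedness condition (\ref{unconfoundedness}); for the second, unwind the empirical distribution using the natural choice $\hat{\lambda}_N^s = m_N^s/N$.

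For $H_\alpha(F) = d(\mu_{\alpha C}, \mu_{\alpha T})$, I would begin by noting that the i.i.d.\ version of (\ref{unconfoundedness}) reduces to $z_1 \indep (r_{T1}, r_{C1}) \mid s_1$. Since $r_1 = r_{T1}$ whenever $z_1 = 1$, unconfoundedness yields
\begin{equation*}
E_F\bigl[d(p, r_1)^\alpha \mid z_1 = 1, s_1 = s\bigr] = E\bigl[d(p, r_{T1})^\alpha \mid s_1 = s\bigr] = E\bigl[d(p, r_{T0})^\alpha \mid s_0 = s\bigr],
\end{equation*}
using identical distribution in the last step. Taking the outer expectation over $s_1$ (whose marginal matches that of $s_0$) and summing over strata collapses the inner expression to the unconditional $E[d(p, r_{T0})^\alpha]$, whose unique minimizer is $\mu_{\alpha T}$ by hypothesis. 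The symmetric argument with $z_1 = 0$ and $r_1 = r_{C1}$ gives $\mu_{\alpha C}$, so the two $\arg\min$ sets are singletons and the infimum reduces to $d(\mu_{\alpha C}, \mu_{\alpha T})$.

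For $H_\alpha(\hat{F}_N) = T_\alpha(Z_N, R_N, S_N)$, I would take $\hat{\lambda}_N^s = m_N^s/N$, which under i.i.d.\ sampling is the natural choice satisfying the convergence hypothesis of Theorem \ref{l2sc}. Since $\hat{F}_N$ places mass $1/N$ on each observed triple $(z_i, r_i, s_i)$, the conditional distribution of $r_1$ given $z_1 = 1$ and $s_1 = s$ under $\hat{F}_N$ is uniform over $\{r_i : z_i = 1, s_i = s\}$, giving
\begin{equation*}
E_{\hat{F}_N}\bigl[d(p, r_1)^\alpha \mid z_1 = 1, s_1 = s\bigr] = \frac{1}{m_{TN}^s}\sum_{i:\, z_i = 1,\, s_i = s} d(p, r_i)^\alpha.
\end{equation*}
Taking the outer expectation with respect to $s_1$ (where $\hat{F}_N(s_1 = s) = m_N^s/N = \hat{\lambda}_N^s$) and swapping the order of summation produces exactly $\sum_{i=1}^N w_{TiN}\, d(p, r_i)^\alpha$, with $w_{TiN}$ as defined just after (\ref{tl21}). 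The resulting $\arg\min$ set is therefore $\bar{A}_{TN}^\alpha$; the parallel computation with $z_1 = 0$ yields $\bar{A}_{CN}^\alpha$, and the infimum coincides with the definition of $T_\alpha(Z_N, R_N, S_N)$.

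The calculations are essentially bookkeeping. The only nontrivial point in the population case is combining unconfoundedness and identical distribution correctly to drop the conditioning on $z_1$ inside the conditional expectation. In the sample case, the main subtlety is that the identity relies on the implicit choice $\hat{\lambda}_N^s = m_N^s/N$: the factor $\hat{F}_N(s_1 = s) = m_N^s/N$ emerging from the outer expectation is precisely what produces the stratum weights $w_{TiN}$ appearing in the definition of $T_\alpha$.
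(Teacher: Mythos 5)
Your proposal is correct and follows essentially the same route as the paper's own proof: unconfoundedness plus the tower property to identify the population objective with $E[d(p,r_{T0})^\alpha]$ (resp. $E[d(p,r_{C0})^\alpha]$), and the explicit choice $\hat{\lambda}_N^s = m_N^s/N$ to unwind the empirical conditional expectations into the weights $w_{TiN}$ and $w_{CiN}$. The only cosmetic difference is that the paper phrases the computation for a generic measurable $\rho$ before specializing to $\rho(x)=d(p,x)^\alpha$, which changes nothing substantive.
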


    % A proof is provided in Section \ref{boot} of the Supplementary Materials.
    A proof is provided in Section B.4 of the Supplementary Materials. Take a parameter $\theta=G(F)=\theta$ of $F$, where $G$ is some statistical functional, and an estimator $\hat{\theta}_N=G(\hat{F}_N)$. Recall that the $(1-\delta)$ bootstrap pivotal interval for $\theta$ is obtained from $B$ bootstrap replications of $\hat{\theta}_N$ calculated by resampling from $\hat{F}_N$. Thus, if the conditions in the above proposition are satisfied, we can obtain confidence intervals using $T_{\alpha}$ for $\textsc{aate}$ when $\alpha=2$ and $\textsc{amte}$ when $\alpha=1$. This may be the case in observational studies free of hidden bias with categorical covariates as each combination of covariates can serve as a stratum, but when the covariates are non-categorical, exact stratification/matching into finitely many strata is usually not possible. However, exact stratification/matching can be approximated by various matching methods. In this case, we may obtain confidence intervals by calculating $T_{\alpha}$ after performing matching and repeating the entire process, including matching, for each of the $B$ bootstrapped resamples. We will examine the empirical performance of the bootstrap pivotal interval in Section \ref{simexp}.

    These confidence intervals can be used to test the null hypothesis $\textsc{aate}=0$ or $\textsc{amte}=0$. 
    % See Section 1 in the Supplementary Material for more information. 
    The proposed estimators for $\textsc{aate}$ and $\textsc{amte}$ are also reasonable as test statistics for Fisher's sharp null hypothesis, which states $r_{Ti}=r_{Ci}$ for all $i=1,\ldots,N$, because these estimators are expected to be small under $H_0$, allowing us to reject $H_0$ for large values. See Section A.2 of the Supplementary Materials for more information on Fisher's sharp null hypothesis and the randomization procedure used to test it.
    % See Section \ref{fisher} of the Supplementary Materials for more information on Fisher's sharp null hypothesis and the randomization procedure used to test it.
    
    	\section{Numerical experiments} \label{exp}

    \subsection{Simulation study} \label{simexp}
    
    We performed four types of simulations to evaluate the large-sample performance of the proposed estimators for the following properties: 
    \begin{itemize}
    \item[1.] Consistency of the estimators in a classical randomized experiment,
    \item[2.] Consistency of the bootstrapped confidence interval coverage in a classical randomized experiment,
    \item[3.] Consistency of the estimators in a matched observational study, and 
    \item[4.] Consistency of the bootstrapped confidence interval coverage in a matched observational study. 
    \end{itemize}
    The data were generated on the two-sphere $S^2=\{(x,y,z):x^2+y^2+z^2=1\}$ for the randomized experiment simulations (types 1 and 2) and on two-dimensional hyperbolic space $\mathbb{H}^2=\{(x,y,z):-x^2+y^2+z^2=-1\}$ for the observational study simulations (types 3 and 4). These two spaces, equipped with appropriate metrics, are Riemannian manifolds of constant sectional curvature 1 and -1, respectively, among the most commonly encountered non-Euclidean metric spaces. For more details on the spheres and hyperbolic spaces as Riemannian manifolds, including explicit representations of the exponential, inverse exponential, and parallel transport maps, refer to Appendix B.1 and B.2 of \cite{Shin2022}.

    \subsubsection{Settings} \label{simexp2}
    Recall the notion of parallel transport on a Riemannian manifold $M$: intuitively, given a differentiable curve $c:[a,b]\rightarrow M$, not necessarily a geodesic, parallel transport allows one to translate a vector in $T_{c(a)}M$ to $T_{c(b)}M$. Assuming a unique geodesic exists between $p_1,p_2\in M$, we will denote parallel transport of $v\in T_{p_1}M$ to $T_{p_2}M$ along this unique geodesic $\Gamma_{p_1\rightarrow p_2}(v)$.

    In all four types of simulation, we generated the potential outcomes $(r_{Ti},r_{Ci})$ independently for each of $N$ units in the following way. Let $p=(1,0,0)^\top$, $v_1=(0,\pi/4,0)^\top$, $v_2=(0,0,-\pi/6)^\top$, $\zeta_T=\exp_p((0,1,0)^\top)$, and $\zeta_C=\exp_p((0,-1,0)^\top)$. We generated two independent covariates, $x_i^1$ and $x_i^2$, (here the superscripts are indices, not exponents) from the uniform $(-1/2,1/2)$ distribution, and defined $r_{Ti}'=\exp_{\zeta_T}(\Gamma_{p\rightarrow \zeta_T}(x^1v_1+x^2v_2))$ and $r_{Ci}'=\exp_{\zeta_C}(\Gamma_{p\rightarrow \zeta_C}(x^1v_1+x^2v_2))$. The normal distribution on a connected Riemannian manifold $M$ is defined in \cite{Fletcher2013} as $f(y;\mu,\sigma^2)=(1/C_M(\mu,\sigma^2)\exp(-d(y,\mu)^2/2\sigma^2)$, where $C_M(\mu,\sigma^2)=\int_M\exp(-d(y,\mu)^2/2\sigma^2)dy$. For details on generating data from the Riemannian normal distribution on spheres and hyperbolic spaces, including the calculation of the normalizing constant $C_M(\mu,\sigma^2)$, see Appendix A.3 of \cite{Shin2022}. We also generated a data point $y_i$ from the Riemannian normal distribution centred at $p$ for $\sigma^2=(\pi/8)^2$ and lifted it into $T_pM$, where $M=S^2$ or $\mathbb{H}^2$, via the inverse exponential map $\log_p$. Then, we defined $r_{Ti}=\exp_{r_{Ti}'}(\Gamma_{p\rightarrow r_{Ti}'}(\log_p(y_i)))$ and $r_{Ci}=\exp_{r_{Ci}'}(\Gamma_{p\rightarrow r_{Ci}'}(\log_p(y_i)))$. By symmetry, $\mu_{2T}=\mu_{1T}=\zeta_T$ and $\mu_{2C}=\mu_{1C}=\zeta_C$, so $\textsc{aate}=\textsc{amte}=2$. We performed additional experiments of types 1 and 3 by varying $\sigma^2$. These results are provided in Tables \ref{sreconsistency} and \ref{obsstudyconsistency}.

    In the observational study simulations, we performed full matching on the generated covariates $x_i^1$ and $x_i^2$ using the rank-based Mahalanobis distance with a propensity score caliper, while in the randomized experiment simulations, we assigned a unit to one of two strata based on $s_i=2-I\{x_i^1\geq0\}$; that is, if $x_i^1\geq0$, the unit is in stratum 1; otherwise, it is in stratum 2.

    Treatment $z_i$ was assigned as follows. Let $m_N^s$ be the number of units in stratum $s$. In the randomized experiment setting, $\lfloor (m_N^s+1)/2\rfloor$ units were randomly assigned to the treatment group, and $\lfloor m_N^s/2\rfloor$ to the control group. In the matched observational study, $z_i$ was generated independently for each unit based on $\text{pr}(z_i=1)=1/(1+\exp(-x_i^1-x_i^2))$. 

    We computed the weighted sample Fr\'echet means and geometric medians using a gradient descent algorithm based on the \texttt{GeodRegr} R package. As per Remark \ref{unique}, we treated the point that this algorithm converged to as the unique element of the appropriate $L_\alpha$ estimator set.

    When calculating $T_{2}$ and $T_{1}$ using the $N$ generated data points, in the stratified randomized experiment simulations, we took $\text{pr}(s_i=1)$ and $\text{pr}(s_i=2)$ as known quantities and fixed the stratum-wise weights $\hat{\lambda}_N^1$ and $\hat{\lambda}_N^2$ to the known values of $\text{pr}(s_i=1)=1/2$ and $\text{pr}(s_i=2)=1/2$, respectively, while, in the observational simulations, we set $\hat{\lambda}_N^s$ to be $m_N^s/N$ for each $s$. Then, in the simulations of types 2 and 4, we calculated confidence intervals using $B=500$ bootstrapped resamples.

    We repeated the above process 500 times for different values of $N$. For a given $N$, let $Z_{N l}$, $R_{N l}$, and $S_{N l}$ be the values of $Z_N$, $R_N$, and $S_N$, respectively, in the $l$th iteration of the experiment in the simulation types 1 and 3, and $C_{2 N l}$ and $C_{1 N l}$ be the bootstrapped confidence intervals for $\textsc{aate}$ and $\textsc{amte}$, respectively, in the $l$th iteration of the experiment in the simulation types 2 and 4. For each $N\in\{32,64,128,256,512,1024\}$ and $\alpha\in\{1,2\}$, we estimated mean absolute errors $(1/500)\sum_{l=1}^{500}\lvert T_{\alpha}(Z_{N l}, R_{N l}, S_{N l})-2\rvert$ for the simulations of types 1 and 3 and confidence interval coverages $(1/500)\sum_{l=1}^{500}I\{2\in C_{\alpha N l}\}$ for the simulations of types 2 and 4.

    \subsubsection{Results}
    \begin{table}[!ht]
    \centering
    \captionsetup{justification=justified}
    \caption{Estimates for mean absolute errors in experiments of types 1 and 3 and for confidence interval coverages in types 2 and 4, with estimated standard errors in parentheses.}
      \resizebox{\textwidth}{!}{%
    \begin{tabular}{cccccccc}
    \hline
    \multicolumn{1}{c}{Experiment type} & \multicolumn{1}{c}{Estimator} & $N=32$ & $N=64$ & $N=128$ & $N=256$ & $N=512$ & $N=1024$  \\
    \hline
    \multirow{4}{*}{1} &
    \multirow{2}{*}{$T_{2}$} & 0.122 & 0.080 & 0.060  & 0.040 & 0.030 & 0.021 \\ &  & (0.088) & (0.063) & (0.045) & (0.031) & (0.024) &(0.016) \\
    & \multirow{2}{*}{$T_{1}$} & 0.148 & 0.092 & 0.067 & 0.046 & 0.035 & 0.025 \\  & &  (0.104) & (0.072) & (0.053) & (0.035) & (0.027) & (0.019) \\
     \hline
    \multirow{4}{*}{2} &
    \multirow{2}{*}{$T_{2}$} & 0.924 & 0.932 & 0.930  & 0.930 & 0.940 & 0.942 \\ &  & (0.012) & (0.011) & (0.011) & (0.011) & (0.011) &(0.010) \\
    & \multirow{2}{*}{$T_{1}$} & 0.910 & 0.908 & 0.922 & 0.938 & 0.940 & 0.944 \\  & &  (0.013) & (0.013) & (0.012) & (0.011) & (0.011) & (0.010) \\
     \hline
    \multirow{4}{*}{3} &
    \multirow{2}{*}{$T_{2}$} & 0.134 & 0.092 & 0.061  & 0.043 & 0.031 & 0.022 \\ &  & (0.088) & (0.063) & (0.045) & (0.031) & (0.024) &(0.016) \\
    & \multirow{2}{*}{$T_{1}$} & 0.161 & 0.104 & 0.071 & 0.049 & 0.036 & 0.025 \\  & &  (0.104) & (0.072) & (0.053) & (0.035) & (0.027) & (0.019) \\
     \hline
    \multirow{4}{*}{4} &
    \multirow{2}{*}{$T_{2}$} & 0.872 & 0.896 & 0.898  & 0.900 & 0.908 & 0.914 \\ &  & (0.015) & (0.014) & (0.014) & (0.013) & (0.013) &(0.013) \\
    & \multirow{2}{*}{$T_{1}$} & 0.848 & 0.884 & 0.900 & 0.904 & 0.910 & 0.920 \\  & &  (0.016) & (0.014) & (0.013) & (0.013) & (0.013) & (0.012) \\
     \hline
    \end{tabular}
    }
    \label{simulres}
    \end{table}

    As listed in Tables \ref{simulres}-\ref{obsstudyconsistency}, the mean absolute error and coverage generally improve with $N$ in both the classical randomized experiment and the matched observational study settings. This is empirical evidence for the strong consistency of the estimators and for the reasonable asymptotic coverage of the confidence intervals in both settings. In both cases, the coverage is fairly close to 95\%, but we find that the coverage of the bootstrap pivotal interval tends to be lower than the stated confidence level. However, the coverage in the observational study case is somewhat lower than in the randomized experiment case. This may be influenced by the fact that the covariates were not categorical, so exact matching could only be approximated.

  \begin{table}[!ht]
    \centering
    \captionsetup{justification=justified}
    \caption{Estimates for mean absolute errors in randomized experiment simulations, with estimated standard errors in parentheses.}
    %\resizebox{\textwidth}{!}{%
    \begin{tabular}{cccccccc}
    \hline
    \multicolumn{1}{c}{$\sigma^2$} & \multicolumn{1}{c}{Estimator} & $N=32$ & $N=64$ & $N=128$ & $N=256$ & $N=512$ & $N=1024$  \\
    \hline
    \multirow{4}{*}{$(\pi/16)^2$}  & \multirow{2}{*}{$T_{2}$} & 0.066 & 0.044 & 0.032  & 0.022 & 0.016 & 0.011 \\ & &  (0.049) & (0.035) & (0.025) & (0.017) & (0.013) &(0.009) \\
     & \multirow{2}{*}{$T_{1}$} & 0.119 & 0.060 & 0.039 & 0.026 & 0.020 & 0.014 \\ & &  (0.197) & (0.099) & (0.031) & (0.021) & (0.016) & (0.011) \\
     \hline
     %\cline{2-9}
     \multirow{4}{*}{$(\pi/4)^2$}  & \multirow{2}{*}{$T_{2}$} & 0.250 & 0.173 & 0.126  & 0.089 & 0.066 & 0.049 \\ & &  (0.182) & (0.137) & (0.094) & (0.083) & (0.063) &(0.035) \\
     & \multirow{2}{*}{$T_{1}$} & 0.292 & 0.189 & 0.139 & 0.094 & 0.073 & 0.052 \\ & &  (0.205) & (0.139) & (0.111) & (0.072) & (0.056) & (0.040) \\
    \hline
    \end{tabular}
    %}
    \label{sreconsistency}
    \end{table}
    
    \begin{table}[!ht]
    \centering
    \captionsetup{justification=justified}
    \caption{Estimates for mean absolute error in observational study simulations, with estimated standard errors in parentheses.}
    %\resizebox{\textwidth}{!}{%
    \begin{tabular}{cccccccc}
    \hline
    \multicolumn{1}{c}{$\sigma^2$} & \multicolumn{1}{c}{Estimator} & $N=32$ & $N=64$ & $N=128$ & $N=256$ & $N=512$ & $N=1024$ \\
    \hline
    \multirow{4}{*}{$(\pi/16)^2$}  & \multirow{2}{*}{$T_{2}$} & 0.072 & 0.048 & 0.032  & 0.022 & 0.016 & 0.011 \\ &  & (0.053) & (0.036) & (0.024) & (0.016) & (0.012) &(0.008) \\
     & \multirow{2}{*}{$T_{1}$} & 0.089 & 0.056 & 0.039 & 0.028 & 0.020 & 0.014 \\ & &  (0.074) & (0.044) & (0.030) & (0.021) & (0.015) & (0.010) \\
     %\cline{2-9}
     \hline
     \multirow{4}{*}{$(\pi/4)^2$}  & \multirow{2}{*}{$T_{2}$} & 0.251 & 0.172 & 0.113  & 0.079 & 0.057 & 0.041 \\ &  & (0.185) & (0.124) & (0.087) & (0.059) & (0.045) &(0.030) \\
     & \multirow{2}{*}{$T_{1}$} & 0.305 & 0.192 & 0.130 & 0.089 & 0.066 & 0.045 \\ & &  (0.235) & (0.143) & (0.096) & (0.068) & (0.049) & (0.036) \\
    \hline
    \end{tabular}
   % }
    \label{obsstudyconsistency}
    \end{table}
    
       \subsection{Real data analysis}
     
     We investigate the causal relationship between Alzheimer's disease and the shape of the corpus callosum, a large white matter structure in the brain that facilitates communication between the two cerebral hemispheres. Our data, containing 186 units with Alzheimer's disease and 223 units without, are from the Alzheimer's Disease Neuroimaging Initiative study. Covariates include gender, handedness, marital status, years of education, retirement status, and age. The treatment variable $z$ is the diagnosis: $z=1$ if the unit has the disease, and $z=0$ otherwise. The outcome variable $r$, the planar shape of the corpus callosum, lies on the $(2K-4)$-dimensional Riemannian manifold $\Sigma_2^K$, which is Kendall's two-dimensional shape space for $K=50$ landmarks. It is the set of all equivalence classes of $K$-gons in the two-dimensional plane, where two $K$-gons are equivalent if one can be transformed into the other through translation, scaling, and rotation. For more details on $\Sigma_2^K$, refer to Appendix B.3 of \cite{Shin2022}, Section 3.11 of the online supplement to \cite{Cornea2017}, or Section 5.2.1 of \cite{Fletcher2013}. The corpus callosum shape data were obtained by extracting the planar shapes from the mid-sagittal slices of magnetic resonance images and segmenting them using the \texttt{FreeSurfer} and \texttt{CCSeg} packages, resulting in a $K\times2$ matrix. Each row of this matrix represents the planar coordinates of one of the $K=50$ boundary points used to define the shape of the corpus callosum, and the corresponding rows for different units represent the corresponding points on the boundaries of the shapes. Our data are from the Alzheimer's Disease Neuroimaging Initiative study and are provided by \cite{Cornea2017} at \url{http://www.bios.unc.edu/research/bias/software.html}. As with the simulations, we treated the point of convergence of a gradient descent algorithm as the unique element of the $L_\alpha$ estimator set in light of Remark \ref{unique}.
     % See Section 4 of the Supplementary Material for more details about the data, data space, and matching procedure. 

     \subsubsection{Matching}
     We implemented matching to approximate within-stratum conditional independence of $z_i$ and $(r_{Ti},r_{Ci})$. Gender, handedness, marital status, years of education, retirement status, and age were used as observed confounders. We considered a variety of matching methods: pair matching, full matching, full matching with restrictions on the maximum number of treatment or control units in a matched set, and almost exact pair and full matching with penalties for imbalances in age, which is thought to have an important effect on the shape of the corpus callosum. Full matching using the rank-based Mahalanobis distance with a propensity score caliper, which showed good covariate balance between the treated and control units under the assessment based on standardized differences, was used in the analysis.

     \subsubsection{Results}
    Our estimators were calculated to be $T_{2}=0.01819$ and $T_{1}=0.01775$. For researchers unfamiliar with this particular manifold, the interpretation of these numbers may be opaque. Figure \ref{fig:planarshapes} aids in this regard by illustrating the planar shapes of the weighted sample Fr\'echet  mean and geometric median corpus callosa of treated and control units. This then provides visual representations of the above numbers, which are respectively the geodesic distances on Kendall's two-dimensional shape space between the Fr\'echet mean planar shapes for the control and treatment groups and the geometric median planar shapes for those two groups.
    
    \begin{figure}[!ht]
       \centering
        \subcaptionbox{Sample Fr\'echet mean corpus callosa. \label{fig:l2cc}}[0.47\linewidth]{
        \includegraphics[width=\linewidth]{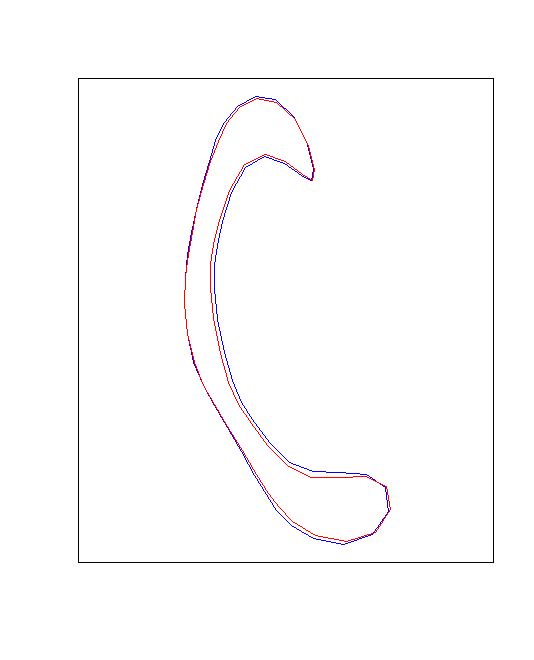}
        } 
        %\hspace{0.1cm}
        \subcaptionbox{Sample geometric median corpus callosa. \label{fig:l1cc}}[0.47\linewidth]{
        \includegraphics[width=\linewidth]{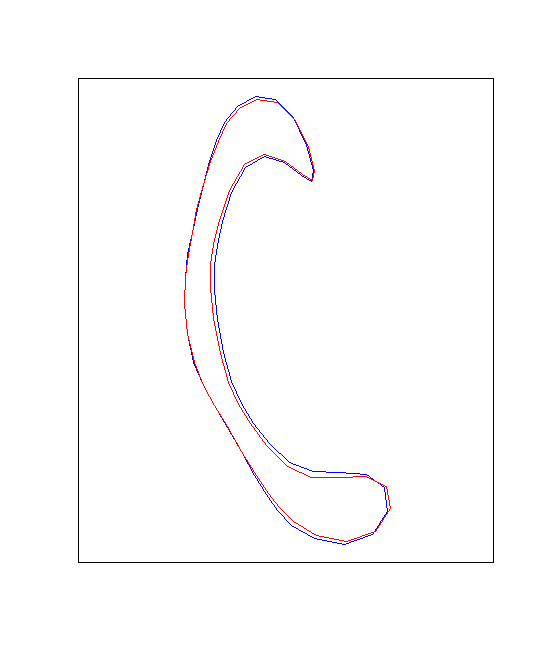}
        }
        \caption{The planar shapes of the sample Fr\'echet mean corpus callosa and the sample geometric median corpus callosa of units with (red) and without (blue) Alzheimer's disease.}
        \label{fig:planarshapes}
    \end{figure}
    
    With these test statistics, we performed randomization inference to test Fisher's null hypothesis by reassigning the treatment indicator within the matched sets. After performing the randomization process 1000 times, the $p$-values for both $T_{2}$ and $T_{1}$ were 0 and 0.002, respectively.

    In addition to our point estimates $T_{2}=0.01819$ and $T_{1}=0.01775$, bootstrapped 95\% pivotal confidence intervals for $\textsc{aate}$ and $\textsc{amte}$ were obtained as $(0.00915, 0.02252)$ and $(0.00633, 0.02146)$, respectively, where the number of bootstrap samples is 1000, and thus, we can reject the null hypotheses $H_0:\textsc{aate}=0$ and $H_0:\textsc{amte}=0$.

This data set exemplifies the usefulness of our metric space-based methodology because the shape is a non-Euclidean characteristic. However, suppose one wants to employ more conventional Euclidean-based methods for causal inference. One immediately runs into the problem of representing each shape as a point in Euclidean space; although each data point is represented as a $K\times 2$ matrix, it would not be appropriate to use this representation because it could treat two data points whose underlying planar configurations differ only by translation, scaling, and rotation as unequal. Therefore, the effects of translation, scaling, and rotation need to be removed as described in Appendix B.3 of \cite{Shin2022}, meaning each point is sent to Kendall's two-dimensional shape space anyway, even when using a Euclidean approach. If one still insists on treating these transformed $K\times 2$ matrices as points in $2K$-dimensional Euclidean space instead of the more natural $(2K-4)$-dimensional space $\Sigma_2^K$, the resulting point estimates are $T_{2}=0.02241$ and $T_{1}=0.02176$, with $p$-values for Fisher's sharp null hypothesis of 0.226 and 0.261, respectively. These are more than 1000 times larger than the equivalent tests on $\Sigma_2^K$, hinting that the power of these tests may be diminished by orders of magnitude when the inherent geometric structure of the data is not exploited. Using 1000 bootstrap samples, the bootstrapped 95\% pivotal confidence intervals for $\textsc{aate}$ and $\textsc{amte}$ under the Euclidean framework were respectively $(0.00204, 0.02922)$ and $(-0.00261, 0.02954)$, or $[0,0.02854)$ since $\textsc{amte}$ is non-negative. Testing the null hypotheses $H_0:\textsc{aate}=0$ and $H_0:\textsc{amte}=0$ with these confidence intervals, although the difference is not quite so stark as with the sharp null hypothesis, these intervals are significantly wider than their counterparts from $\Sigma_2^K$, suggesting lower power. The second interval even includes 0, and so we cannot reject $H_0:\textsc{amte}=0$ at the 95\% level, even though we could do so on $\Sigma_2^K$. These findings indicate that the use of existing Euclidean methods for causal inference may lead to reductions in power when the data set has a natural non-Euclidean structure; further research is required.
    
    % Bootstrap pivotal confidence interval for $\textsc{aate}$ = $(0.009617, 0.022777)$, bootstrap pivotal confidence interval for $\textsc{amte}$ = $(0.008680, 0.024037)$ where the number of boostrap samples is 1000.
        
     \section{Conclusion} \label{conclusion}
        In this study, we defined the notions of absolute average and median treatment effects on metric spaces, including Riemannian manifolds. We proposed estimators for these quantities using stratification, as well as testing procedures and confidence intervals and proved the strong consistency of the estimators when the space is proper. Simulation experiments provided empirical evidence that these estimators and confidence intervals perform well. These estimators were used as test statistics for randomization testing of the sharp null hypothesis that there is no causal relationship between Alzheimer’s disease and the shape of the corpus callosum. The sharp null hypothesis was rejected using both test statistics, implying that Alzheimer's disease has a causal effect on the shape of the corpus callosum. We also used confidence intervals to reject the weak null hypotheses that $\textsc{aate}$ and $\textsc{amte}$ are 0. On the other hand, conventional Euclidean methods struggled to reject these hypotheses, possibly suggesting significantly worse power.
        
        There are many possible avenues for future research into causal inference on metric spaces. For example, the asymptotic distribution of the proposed estimators would be useful for calculating confidence intervals without bootstrapping. Furthermore, beyond $\textsc{aate}$ and $\textsc{amte}$, which measure the magnitude of the overall causal effect but provide no sense of direction, both the magnitude and direction of the overall causal effects can be investigated by defining a canonical sense of direction at every point using the boundary at infinity in the global non-positive curvature metric space, known as Hadamard space or complete \textsc{cat}(0) space. 
        
           \section*{Acknowledgement} 
This research was supported by the National Research Foundation of Korea (NRF) funded by the Korea government (2021R1A2C1091357).

\bibliographystyle{apalike}
\bibliography{refs}

\newpage

	\section*{Supplementary materials: Absolute average and median treatment effects as causal estimands on metric spaces}	
% Throughout this document, equations of the form (x), as opposed to (A.x) or (B.x), allude to equations in the original manuscript.

\appendix
\renewcommand{\theequation}{\thesection.\arabic{equation}}
\setcounter{equation}{0}
\section{Miscellaneous}
    
\subsection{Geodesic regression}\label{geodesic}

\subsubsection{Geodesic regression on Riemannian manifolds} \label{manifoldgeodesic}

       A topological manifold $M$ is a Hausdorff, second-countable topological space that locally resembles Euclidean space. A smooth manifold is a topological manifold with a so-called smooth structure over which calculus can be performed, and a Riemannian manifold $(M,g)$ is a smooth manifold $M$ equipped with a Riemannian metric $g$, that is, a smooth map that assigns an inner product to each point $p\in M$. We will refer to $(M,g)$ as $M$, suppressing mention of $g$ unless necessary. This $g$ can be used to define the lengths of piecewise continuously differentiable curves, and the distance between two points $p_1,p_2\in M$ connected by such a curve, is the infimum of the set of lengths of all paths with endpoints $p_1$ and $p_2$. Because path-connectedness and connectedness are equivalent on manifolds if $M$ is connected, one can then define a function $d:M\times M\rightarrow\mathbb{R}$ that satisfies the axiomatic properties of a metric by letting $d(p_1,p_2)$ be the distance between $p_1$ and $p_2$. The Hopf-Rinow theorem shows that if $M$ is a complete and connected Riemannian manifold, it is proper. %When $M$ is a complete connected Riemannian manifold, \cite{Karcher1977}, \cite{Bhattacharya2003}, \cite{Fletcher2009} and \cite{Yang2010} describe conditions under which the Fr\'echet mean and geometric median exist and are unique. %if the support of $Y$ is bounded and either the sectional curvatures of $M$ are non-positive or they are bounded above and the support of $Y$ is sufficiently concentrated. %When $M$ is a connected Riemannian manifold, the Fr\'echet mean (set) is also called the intrinsic mean (set).

       A geodesic $\gamma:I\rightarrow M$ on a Riemannian manifold, where $I\subset\mathbb{R}$ is a real interval and $0\in I$, can be defined by an initial point $\gamma(0)$ and $\dot{\gamma}(0)\in T_{\gamma(0)}M$, where $T_pM$ denotes the tangent space of $M$ at the point $p\in M$. Then the exponential map $\exp_p:U_p\rightarrow M$ is defined by $\exp_p(v)=\gamma_{p,v}(1)$, where $\gamma_{p,v}:I\rightarrow M$ is a geodesic for which $\gamma_{p,v}(0)=p$, $\dot{\gamma}_{p,v}(0)=v$; so geodesics on $M$ are of the form $\exp_p(tv)$. Here $U_p\subset T_pM$ is the largest neighborhood of $0\in T_pM$ for which the above makes sense; that is, for which $\gamma_{p,v}(t)$ exists in $M$ for all $t\in[0,1]$. If $M=\mathbb{R}^n$, $U_p=T_p\mathbb{R}^n\cong\mathbb{R}^n$ and $\exp_p(v)=p+v$. Intuitively, $\exp_p(v)$ is the point that results from wrapping a vector $v\in T_pM$ onto $M$. The Hopf-Rinow theorem shows that if $M$ is a complete connected Riemannian manifold, $U_p=T_pM$ in the definition of $\exp_p$ for all $p\in M$, meaning that geodesics starting from $p$ continue indefinitely in every direction; this property is called geodesic completeness. The inverse exponential map $\log_p:W_p\rightarrow T_pM$ is defined by $\log_p(\exp_p(v))=v$. Since $\exp_p$ is locally diffeomorphic, hence bijective, from some neighborhood of $0\in U_p$ to some neighborhood of $p$, we define $W_p\subset M$ to be the largest such neighborhood of $p$. If $M=\mathbb{R}^n$, $W_p=\mathbb{R}^n$ and $\log_p(q)=q-p$. For $q\in W_p$, $d(p,q)=\lVert \log_p(q)\rVert$.

       Let $M$ be a complete connected Riemannian manifold. In the simple geodesic regression model defined by \cite{Fletcher2013}, a response $y\in M$ given a covariate $x\in\mathbb{R}$ is generated from the following model
     \begin{equation*}
        y=\exp_{\exp_p(xv)}(\epsilon),
     \end{equation*}
     where $p\in M$, $v\in T_pM$ and $\epsilon\in T_{\exp_p(xv)}M$. For Euclidean space, $\exp_p(v)=p+v$; thus, the above geodesic model coincides with the multiple linear regression $y=p+xv+\epsilon$ when $M=\mathbb{R}^n$. %The simple geodesic regression model is illustrated in \Cref{fig:georeg}.
     
     Given data points $(x_i,y_i)\in\mathbb{R}\times M$ for $i=1,\ldots,N$ and associated weights $w_i$ that satisfy $\sum_{i=1}^Nw_i=1$, we consider the weighted least squares, or $L_2$, estimator set by 
     \begin{equation*} 
        \arg\min_{(p,v):p\in M,v\in T_pM}\sum_{i=1}^Nw_id(\exp_p(x_iv),y_i)^2,
     \end{equation*}
    and the weighted least absolute deviations, or $L_1$, estimator set by
    \begin{equation*}
        \arg\min_{(p,v):p\in M,v\in T_pM}\sum_{i=1}^Nw_id(\exp_p(x_iv),y_i).
     \end{equation*}
     When $w_i=1/N$ for $i=1,\ldots,N$, we obtain the $L_2$ and $L_1$ estimators of \cite{Fletcher2013} and \cite{Shin2022}. These solutions can usually be found using gradient descent algorithms, such as the one provided in the \texttt{GeodRegr} R package.

     \subsubsection{Geodesic regression on geodesic spaces} \label{spacegeodesic}

In a metric space $M$, given an interval $I\subset \mathbb{R}$ containing 0, a geodesic is a map $\gamma:I\rightarrow M$ for which there exists some constant $u_\gamma\geq0$ and, for all $t\in I$, some neighborhood $J\subset I$ such that $t_1,t_2\in J$ implies $d(\gamma(t_1),\gamma(t_2))=u_\gamma\lvert t_2-t_1\rvert$; a minimal geodesic is a geodesic for which the aforementioned $J=I$. In $\mathbb{R}^n$, both geodesics and minimal geodesics are lines, line segments, and rays, while in $S^n$, geodesics are arcs of great circles but only those arcs of length less than or equal to $\pi$ are minimal geodesics. We call $u_\gamma$ the speed of the geodesic. If $0\in I$, we say that $\gamma$ starts at $\gamma(0)$. A geodesic space is a metric space for which any two points $p,q\in M$ can be joined by at least one minimal geodesic, i.e., there exists a minimal geodesic $\gamma$ such that $\gamma(0)=p$ and $\gamma(1)=q$. If two geodesics coincide on some real interval, then a geodesic can be defined on the interval that is the union of their domains; doing so for all such geodesics will produce what we call a geodesic of the maximal domain. In the case of connected Riemannian manifolds, the above definition of a geodesic coincides with the Riemannian one and $u_\gamma=\lVert\gamma'(0)\rVert$. The Hopf-Rinow theorem shows that if $M$ is a complete and connected Riemannian manifold, it is a geodesic space.
          
    Let $M$ be a geodesic space and  $D_p$ denote the set of geodesics $\gamma_p$ of the maximal domain starting at $p$. When $M$ is a complete connected Riemannian manifold, there is a natural bijection between the tangent spaces at $M$, and a $D_p$ mapping $v\in T_pM$ to the geodesic in $D_p$ defined by $\exp_p(tv)$ exists. Therefore, given a covariate $x\in\mathbb{R}$ and a response $y\in M$, one could easily generalize the simple geodesic regression model defined by \cite{Fletcher2013} to geodesic spaces by $y=\epsilon_{\gamma_{p}(x)}(1)$, where $\gamma_p\in D_p$ and $\epsilon_{\gamma_{p}(x)}\in D_{\gamma_{p}(x)}$. This model reduces to a weighted version of the geodesic regression model of \cite{Fletcher2013} when $M$ is a complete connected Riemannian manifold. 
    % For more details on geodesic regression on Riemannian manifolds, see Section 2 in the Supplementary Material. 
    Then, given data points $(x_i,y_i)\in\mathbb{R}\times M$ for $i=1,\ldots,N$ and associated weights $w_i$ that satisfy $\sum_{i=1}^N w_i = 1$, we define the $L_\alpha$ estimator set by 
     \begin{equation} \label{l2estimator}
        \arg\min_{(p,\gamma_p):p\in M,\gamma_p\in D_p}\sum_{i=1}^Nw_id(\gamma_{p}(x_i),y_i)^\alpha.
     \end{equation}
    A minimizer may not exist or be unique.

     Considering only constant $\gamma_p$ by fixing the speed of the geodesic $u_{\gamma_p}$ at 0 in (\ref{l2estimator}) and optimizing with respect to $p$, we get the definition of the weighted sample $L_\alpha$ estimator set, $\arg\min_{p\in M}\sum_{i=1}^Nw_id(p,y_i)^\alpha$, which is equivalent to the set of minimizers of (2) for the random element whose distribution has mass $w_i$ at $y_i\in M$ for $i=1,\ldots,N$.
     % which is equivalent to the set of minimizers of (\ref{frechet}) for the random element whose distribution has mass $w_i$ at $y_i\in M$ for $i=1,\ldots,N$.

\subsection{Testing Fisher's sharp null hypothesis}\label{fisher}

        Let $r_{Ti}$ and $r_{Ci}$  ($i=1,\ldots,N$) be instances of treatment and control potential outcomes, respectively, which can be collected as ordered $N$-tuples $R_{TN}$ and $R_{CN}$, respectively. Fisher's sharp null hypothesis states
    \begin{equation*}
    H_0:R_{TN}=R_{CN},
    \end{equation*}  
      or $r_{Ti}=r_{Ci}$ for all $i=1,\ldots,N$. Associated with unit $i$ are a stratum $s_i$, a binary treatment variable $z_i$, and an observed outcome
    \begin{equation*}
    r_i=r_i(z_i)=\begin{cases} r_{Ti} &~~\mbox{if $z_i=1$} \\
    r_{Ci} &~~\mbox{if $z_i=0$.}
    \end{cases}
    \end{equation*}
    These values can also be collected into ordered $N$-tuples $S_N$, $Z_N$, and $R_N=R_N(Z_N)$, respectively. Consider $R_{TN}$, $R_{CN}$, and $S_N$ to be fixed so that the only source of randomness is $Z_N$. Then under $H_0$, $R_N$ is fixed, and $R_{TN}$ and $R_{CN}$ are known: $R_{TN}=R_{CN}=R_N$. So, in a randomized experiment, where the assignment mechanism, that is, the distribution of $Z_N$ given $R_{TN}$, $R_{CN}$ and $S_N$, is known, the exact distribution of any test statistic $T(Z_N,R_N,S_N)$ is characterized by 
    \begin{equation} \label{ri}
    \text{pr}\big(T(Z_N,R_N,S_N)=T(Z,R_N,S_N)\big)=\text{pr}(Z_N=Z \mid R_{TN},R_{CN},S_N).
    \end{equation}
    Since the number of possible values for $Z$ is finite, exact $p$-values can be obtained by calculating $\text{pr}(Z_N=Z \mid R_{TN},R_{CN},S_N)$ for all possible $Z$. This process is called randomization inference. This number of possible values is usually too large, and an approximation is used by generating a large number of samples from this known distribution.

    Typically, in a stratified randomized experiment, the assignment mechanism is unconfounded so that the assignment mechanism does not depend on $(R_{TN},R_{CN})$, and chosen so that the total number of treated units in each stratum, $m_{TN}^s=\sum_{k=1}^Nz_kI\{s_k=s\}$, is fixed and each possible treatment assignment has equal probability: $\text{pr}(Z_N=Z \mid R_{TN},R_{CN},S_N)=\text{pr}(Z_N=Z \mid S_N)=1/K$, where $K$ is the total number of possible treatment assignments.

    Let $M_{TN}=(m_{TN}^1,\ldots,m_{TN}^S)^\top$ be the vector consisting of the total numbers of treated units in each stratum. In an observational study that is free of hidden bias (that is, with no unmeasured confounder), it can be shown that exact stratification/matching on $x_i$ given $M_{TN}$ leads to $\text{pr}(Z_N=Z \mid M_{TN})=1/K$, where $K$ is the total number of possible treatment assignments given $M_{TN}$. That is, given $M_{TN}$, an observational study free of hidden bias and with exact stratification/matching mimics a stratified randomized experiment with the assignment mechanism of the type described in the previous paragraph and Fisher's sharp null hypothesis can be tested in the same way. In practice, exact stratification/matching is not possible for non-discrete covariates, and various matching methods are used to approximate it.

% \subsection{Elaboration on Remark \ref{intuitive}}\label{example}

\section{Proofs}	\label{proofs}	
\setcounter{equation}{0}
% \subsection{Proof of Theorem \ref{l2equiv}} \label{appenl2equiv}
\subsection{Proof of Theorem 1} \label{appenl2equiv}

% Theorem \ref{l2equiv} follows immediately from the following result and the fact that
Theorem 1 follows immediately from the following result and the fact that $u_{\bar{\gamma}_p}=\lVert\bar{\gamma}_p'(0)\rVert$ when $M$ is a complete connected Riemannian manifold.

\begin{theorem}
    Let $M$ be a geodesic space and $\alpha=1$ or $2$. Then $\inf_{(\bar{p},\bar{\gamma}_p)\in\bar{G}^\alpha}u_{\bar{\gamma}_p}=T_{\alpha}(Z_N,R_N,S_N)$, where $\bar{G}^\alpha$ is the weighted simple geodesic regression $L_\alpha$ estimator set from (\ref{l2estimator}) of the points $(z_i,r_i)\in\mathbb{R}\times M$, $i=1,\ldots,N$ with weights $W_{iN}$, and is invariant with respect to $\beta_T$ and $\beta_C$.
\end{theorem}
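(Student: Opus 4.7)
The key observation is that because $z_i \in \{0,1\}$, for any $(p, \gamma_p)$ with $\gamma_p \in D_p$ and $q := \gamma_p(1)$, the weighted geodesic regression objective splits as
\[
\sum_{i=1}^N W_{iN}\, d(\gamma_p(z_i), r_i)^\alpha = \beta_C \sum_{i:\, z_i=0} w_{CiN}\, d(p, r_i)^\alpha + \beta_T \sum_{i:\, z_i=1} w_{TiN}\, d(q, r_i)^\alpha,
\]
a non-negative sum of two terms in decoupled variables $p$ and $q$. The plan is to use this decoupling to identify $\bar G^\alpha$ in terms of $\bar A_{CN}^\alpha$ and $\bar A_{TN}^\alpha$, then relate the speed $u_{\bar\gamma_p}$ to the distance between points of these two sets, which is $T_\alpha$.

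First I would show that $(\bar p, \bar\gamma_p) \in \bar G^\alpha$ if and only if $\bar p \in \bar A_{CN}^\alpha$ and $\bar\gamma_p(1) \in \bar A_{TN}^\alpha$. Necessity follows from a swap argument: if, say, $\bar p \notin \bar A_{CN}^\alpha$, pick $p' \in \bar A_{CN}^\alpha$ together with any geodesic from $p'$ to $\bar\gamma_p(1)$ (guaranteed to exist by the geodesic-space hypothesis), obtaining a strictly better pair. Sufficiency follows because for any $p \in \bar A_{CN}^\alpha$ and $q \in \bar A_{TN}^\alpha$, a minimal geodesic from $p$ to $q$ extended to its maximal domain provides an admissible $\gamma_p$. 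This characterization is manifestly independent of $\beta_T, \beta_C > 0$, which settles the invariance claim, and it also reveals that the only constraint linking $\bar p$ and $\bar\gamma_p(1)$ within $\bar G^\alpha$ is the existence of a geodesic joining them.

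For the speed identity, I would prove both inequalities. For $\inf u_{\bar\gamma_p} \geq T_\alpha$, I would first establish the basic metric fact that $d(\gamma(0), \gamma(1)) \leq u_\gamma$ for any geodesic $\gamma$ on an interval containing $[0,1]$, by covering $[0,1]$ with finitely many neighborhoods on which the local isometry condition $d(\gamma(t_1), \gamma(t_2)) = u_\gamma|t_2 - t_1|$ applies and chaining via the triangle inequality. Then for any $(\bar p, \bar\gamma_p) \in \bar G^\alpha$, $u_{\bar\gamma_p} \geq d(\bar p, \bar\gamma_p(1)) \geq T_\alpha$ by the characterization above. For the reverse, given any $\bar r_C \in \bar A_{CN}^\alpha$ and $\bar r_T \in \bar A_{TN}^\alpha$, a minimal geodesic from $\bar r_C$ to $\bar r_T$ exists and has speed exactly $d(\bar r_C, \bar r_T)$; extended to its maximal domain it pairs with $\bar r_C$ to form an element of $\bar G^\alpha$, so $\inf u_{\bar\gamma_p} \leq d(\bar r_C, \bar r_T)$, and taking the infimum over $(\bar r_C, \bar r_T)$ gives $\inf u_{\bar\gamma_p} \leq T_\alpha$.

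The main obstacle, if there is one, is purely technical: making the cover-and-chain argument rigorous for the inequality $d(\gamma(0), \gamma(1)) \leq u_\gamma$ in the abstract metric-space setting rather than the familiar Riemannian one, where it is immediate. Everything else follows cleanly from the decoupling of the objective and the defining property of geodesic spaces.
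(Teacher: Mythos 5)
Your proposal is correct and follows essentially the same route as the paper's proof: decouple the weighted objective into control and treatment parts, characterize $\bar{G}^\alpha$ as exactly the pairs with $\bar{p}\in\bar{A}_{CN}^\alpha$ and $\bar{\gamma}_p(1)\in\bar{A}_{TN}^\alpha$, and obtain the two inequalities from $u_{\bar{\gamma}_p}\geq d(\bar{p},\bar{\gamma}_p(1))$ and from the existence of a minimal geodesic of speed $d(\bar{r}_C,\bar{r}_T)$. The only difference is cosmetic: you supply a cover-and-chain justification for $d(\gamma(0),\gamma(1))\leq u_\gamma$ in the abstract geodesic-space setting, a step the paper asserts without comment.
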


   \begin{proof}
Denote $\inf_{(\bar{p},\bar{\gamma}_p)\in\bar{G}^\alpha}u_{\bar{\gamma}_p}$ by $T_{\alpha, 2}(Z_N,R_N,S_N)$. For any $\alpha\in\{1,2\}$, $(\bar{p},\bar{\gamma}_p)\in\bar{G}^\alpha$, $\bar{r}_T\in\bar{A}_{TN}^\alpha$ and $\bar{r}_C\in\bar{A}_{CN}^\alpha$, $Q(\bar{p},\bar{\gamma}_p)$ defined as $\sum_{i=1}^NW_{iN}d(\bar{\gamma}_p(z_i),r_i)^\alpha$ satisfies
\begin{align} \label{ineq}
Q(\bar{p},\bar{\gamma}_p)&=\beta_T\sum_{i:z_i=1}w_{TiN}d(\bar{\gamma}_p(1),r_i)^\alpha+\beta_C\sum_{i:z_i=0}w_{CiN}d(\bar{p},r_i)^\alpha \nonumber \\
&=\beta_T\sum_{i=1}^Nw_{TiN}d(\bar{\gamma}_p(1),r_i)^\alpha+\beta_C\sum_{i=1}^Nw_{CiN}d(\bar{p},r_i)^\alpha \nonumber \\
&\geq\beta_T\sum_{i=1}^{N}w_{TiN}d(\bar{r}_T,r_i)^\alpha+\beta_C\sum_{i=1}^{N}w_{CiN}d(\bar{r}_C,r_i)^\alpha,
\end{align}
where the last inequality follows immediately from the definitions of $\bar{A}_{TN}^\alpha$ and $\bar{A}_{CN}^\alpha$. As $M$ is a geodesic space, there is at least one minimal geodesic from $\bar{r}_C$ to $\bar{r}_T$; that is, a geodesic $\bar{\gamma}$ such that $\bar{\gamma}(0)=\bar{r}_C$, $\bar{\gamma}(1)=\bar{r}_C$ and the speed $u_{\bar{\gamma}}$ is $d(\bar{r}_C,\bar{r}_T)$. $Q(\bar{\gamma}(0),\bar{\gamma})$ equals (\ref{ineq}), so $(\bar{\gamma}(0),\bar{\gamma})\in\bar{G}^\alpha$. So there is at least one element of $\bar{G}^\alpha$  for which $u_{\bar{\gamma}}=d(\bar{r}_C,\bar{r}_T)$; this implies that $\{d(\bar{r}_C,\bar{r}_T):\bar{r}_T\in\bar{A}_{TN}^\alpha,\bar{r}_C\in\bar{A}_{CN}^\alpha\}\subset\{u_{\bar{\gamma}_p}:(\bar{p},\bar{\gamma}_p)\in\bar{G}^\alpha\}$, so $T_{\alpha, 2}(Z_N,R_N,S_N)\leq T_{\alpha}(Z_N,R_N,S_N)$.

On the other hand, the equality of $Q(\bar{\gamma}(0),\bar{\gamma})$ and (\ref{ineq}) means that $(\bar{p},\bar{\gamma}_p)\in\bar{G}^\alpha$ if and only if $Q(\bar{p},\bar{\gamma}_p)$ equals (\ref{ineq}), or equivalently, $\bar{\gamma}_p(1)\in\bar{A}_{TN}^\alpha$ and $\bar{p}\in\bar{A}_{CN}^\alpha$. So $(\bar{p},\bar{\gamma}_p)\in\bar{G}^\alpha$ implies that 

\begin{equation*}
u_{\bar{\gamma}_p}\geq d(\bar{p},\bar{\gamma}_p(1))\geq \inf_{\bar{r}_T\in\bar{A}_{TN}^\alpha,\bar{r}_C\in\bar{A}_{CN}^\alpha}d(\bar{r}_C,\bar{r}_T)=T_{\alpha}(Z_N,R_N,S_N),
\end{equation*} 
where the first equality is achieved if $\bar{\gamma}$ is a minimal geodesic between $\bar{p}$ and $\exp_{\bar{p}}(\bar{v}))$. Taking the infimum over all $(\bar{p},\bar{\gamma}_p)\in\bar{G}$ in the above then gives $T_{\alpha, 2}(Z_N,R_N,S_N)\geq T_{\alpha}(Z_N,R_N,S_N)$.

The conclusions of the above two paragraphs imply $T_{\alpha, 2}(Z_N,R_N,S_N)=T_{\alpha}(Z_N,R_N,S_N)$ regardless of the values of $\beta_T$ and $\beta_C$.
    \end{proof}

  % \subsection{Proof of Theorem \ref{l2gzation}} \label{appenl2gzation}
  \subsection{Proof of Theorem 2} \label{appenl2gzation}

    %Throughout this section, $\Omega$ is shorthand for $(\Omega,\mathcal{F},\text{pr})$, which is a probability space, while $M$ is equipped with its Borel $\sigma$-algebra, the set $\{1,\ldots,\Xi\}$ with the discrete $\sigma$-algebra, and $M\times\{1,\ldots,\Xi\}$ and $M\times M\times\{1,\ldots,\Xi\}$ with their induced product $\sigma$-algebras.

    %As expected, an unbiased estimator of $\textsc{aate}$ is hard to come by. As an extreme example of why this is the case, when $\textsc{aate}=0$, an unbiased estimator has mean 0, but considering that a reasonable estimator is always non-negative, it would have to equal 0 with probability 1. When $M=\mathbb{R}$, the standard estimators of $\tau$ are unbiased, but their absolute values are upwardly biased estimators of $\lvert\tau\rvert=\textsc{aate}$. However, in proper metric spaces, $T_{2}(Z_N,R_N,S_N)$ is a strongly consistent estimator of $\textsc{aate}$ when it exists. We first need the following result, which is an interesting result in its own right. 

    % The proof of Theorem \ref{l2gzation} requires several lemmas.
    The proof of Theorem 2 requires several lemmas. For notational simplicity, we will suppress the mention of $\omega\in\Omega$ for all quantities, points, maps, and sets apart from $N_1,N_2,N_3,N_4$ and $N_5$. We begin with some lemmas. Let all relevant terms be as defined in the statement of Theorem 2 or as most recently defined in this subsection unless otherwise stated.
    % Let all relevant terms be as defined in the statement of Theorem \ref{l2gzation} or as most recently defined in this subsection unless otherwise stated.

    For any $s\in\{1,\ldots,\Xi\}$ and $j\in\mathbb{Z}^+$, define $\Omega_*^s=\{m_N^s\rightarrow\infty\}$ and
    \begin{equation*}
    k_j^s=\begin{cases}
        \arg\min_{i\in\mathbb{Z}^+}\{\sum_{i=1}^kI\{s_i=s\}=j\} &~~\mbox{on $\Omega_*^s$} \\
        1 &~~\mbox{on $(\Omega_*^s)^c$,}
    \end{cases}
    \end{equation*}
    so that on $\Omega_*^s$, $k_j^s$ is the $j$th positive integer for which $s_i=s$.
    
    \begin{lemma} \label{rv}
    The map $k_j^s$ is a random element defined on $(\Omega,\mathcal{F},\text{pr})$, as are the maps $y_{(j)}^s$, defined to be $y_{k_j^s}$, and $s_{(j)}^s$, defined to be $s_{k_j^s}$.
    \end{lemma}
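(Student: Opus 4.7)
The plan is to verify measurability of $k_j^s$ directly from its definition, and then deduce measurability of $y_{(j)}^s$ and $s_{(j)}^s$ by a countable-union argument, so the whole proof reduces to careful bookkeeping on the piecewise definition. Since $k_j^s$ is defined differently on $\Omega_*^s$ and its complement, I first need to confirm that $\Omega_*^s \in \mathcal{F}$. Each $m_N^s = \sum_{k=1}^N I\{s_k=s\}$ is measurable because the $s_k$ are random elements into $\{1,\ldots,\Xi\}$ equipped with the discrete $\sigma$-algebra, so
\[
\Omega_*^s \;=\; \{m_N^s\to\infty\} \;=\; \bigcap_{L=1}^\infty\bigcup_{N=1}^\infty\{m_N^s\geq L\} \;\in\; \mathcal{F}.
\]

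Next I would show, for each fixed $n\in\mathbb{Z}^+$, that $\{k_j^s=n\}\in\mathcal{F}$. Splitting on $\Omega_*^s$ and its complement, on $\Omega_*^s$ the index $k_j^s$ equals $n$ if and only if $s_n=s$ and exactly $j$ of the first $n$ indices satisfy $s_i=s$, which gives
\[
\{k_j^s=n\}\cap\Omega_*^s \;=\; \Omega_*^s \cap \{s_n=s\} \cap \{m_n^s=j\},
\]
a measurable set. On $(\Omega_*^s)^c$, by the piecewise definition the event $\{k_j^s=n\}$ equals $(\Omega_*^s)^c$ when $n=1$ and is empty otherwise. Taking the union of the two measurable pieces shows $\{k_j^s=n\}\in\mathcal{F}$, so $k_j^s$ is a random element of $\mathbb{Z}^+$ with its discrete $\sigma$-algebra.

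Finally, for any Borel set $B\subseteq M$,
\[
\{y_{(j)}^s\in B\} \;=\; \bigcup_{n=1}^\infty\bigl(\{k_j^s=n\}\cap\{y_n\in B\}\bigr),
\]
which is a countable union of measurable sets since each $y_n$ is a random element of $M$; this establishes measurability of $y_{(j)}^s$. The identical decomposition with $\{s_n\in\{s'\}\}$ in place of $\{y_n\in B\}$ handles $s_{(j)}^s$. There is no real analytic obstacle here; the only care required is in treating the piecewise definition on $(\Omega_*^s)^c$ and in writing $\Omega_*^s$ as a countable combination of measurable sets, both of which are straightforward.
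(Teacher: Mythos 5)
Your proof is correct and follows essentially the same route as the paper: show each level set $\{k_j^s=n\}$ is measurable by splitting on $\Omega_*^s$ versus its complement, then obtain measurability of $y_{(j)}^s$ and $s_{(j)}^s$ from the countable decomposition over $\{k_j^s=n\}$. The only difference is cosmetic — you write the event as $\{s_n=s\}\cap\{m_n^s=j\}$ rather than $\{m_{n-1}^s=j-1\}\cap\{s_n=s\}$ and you spell out that $\Omega_*^s\in\mathcal{F}$, which the paper leaves implicit.
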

    \begin{proof}
        For $k_j^s$, it suffices to show that $\{k_j^s=t\}$ is measurable for each $t\in\mathbb{Z}^+$. Noting that $\{k_j^s=t\}=\{\sum_{i=1}^{t-1}I\{s_i=s\}=j-1\}\bigcap\{s_t=s\}\bigcap\Omega_*^s\in\mathcal{F}$ for $t>1$ and $\{k_j^s=1\}=(\{j=1\}\bigcap\{s_1=s\})\bigcup(\Omega_*^s)^c\in\mathcal{F}$ completes the proof.

        Then for all $t\in\mathbb{Z}^+$, $y_{(j)}^s=Y_t$ on $\{k_j^s=t\}$, which is measurable since $\{k_j^s=t\}$ is measurable for all $t\in\mathbb{Z}^+$. Similarly, $s_{k_j^s}$ is measurable.
    \end{proof}
    
    \begin{lemma} \label{iid}
        The random elements $y_{(j)}^s \mid \Omega_*^s$ and $y_0 \mid (s_0=s)$ are identically distributed, while $y_{(1)}^s \mid \Omega_*^s,y_{(2)}^s \mid \Omega_*^s,\ldots$ are independent. 
    \end{lemma}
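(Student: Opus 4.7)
The plan is to establish both conclusions by partitioning $\Omega_*^s$ according to the value of $k_j^s$ and then exploiting two structural facts: the joint independence of $\{(y_i, s_i)\}_{i \geq 1}$, and the hypothesis that the conditional distribution of $y_i$ given $s_i = s$ coincides with that of $y_0$ given $s_0 = s$.

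First I would fix a measurable $B \subseteq M$ and write
\begin{equation*}
\text{pr}(y_{(j)}^s \in B,\, \Omega_*^s) = \sum_{t=j}^{\infty} \text{pr}\bigl(y_t \in B,\; \{k_j^s = t\} \cap \Omega_*^s\bigr).
\end{equation*}
The event $\{k_j^s = t\} \cap \Omega_*^s$ factors as $A_t' \cap \{s_t = s\} \cap B_t$, where $A_t' = \{\sum_{i=1}^{t-1} I\{s_i = s\} = j-1\}$ depends only on $s_1, \ldots, s_{t-1}$ and $B_t = \{\sum_{i=t+1}^{\infty} I\{s_i = s\} = \infty\}$ depends only on $s_{t+1}, s_{t+2}, \ldots$; here I use that $\Omega_*^s = B_t$, since $\Omega_*^s$ is a tail event in $(s_1, s_2, \ldots)$. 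Because $(y_t, s_t)$ is independent of $\{(y_i, s_i) : i \neq t\}$, and because $s_t$, $A_t'$, and $B_t$ involve disjoint blocks of indices and so are mutually independent, the joint probability factors as $\text{pr}(y_t \in B \mid s_t = s)\cdot \text{pr}(s_t = s)\cdot \text{pr}(A_t')\cdot \text{pr}(B_t)$. The hypothesis replaces the first factor with $\text{pr}(y_0 \in B \mid s_0 = s)$; reassembling the remaining factors as $\text{pr}(\{k_j^s = t\} \cap \Omega_*^s)$, summing over $t$, and dividing by $\text{pr}(\Omega_*^s)$ then yields $\text{pr}(y_{(j)}^s \in B \mid \Omega_*^s) = \text{pr}(y_0 \in B \mid s_0 = s)$.

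For the independence claim I would run the same calculation in multi-index form. Given $j_1 < \cdots < j_n$ and measurable $B_1, \ldots, B_n$, I would partition over $t_1 < \cdots < t_n$ via $\bigcap_{l=1}^n \{k_{j_l}^s = t_l\} \cap \Omega_*^s$, which decomposes into $\bigcap_{l=1}^n \{s_{t_l} = s\}$ together with an event on $\{s_i : i \notin \{t_1, \ldots, t_n\}\}$. Joint independence of $(y_{t_1}, s_{t_1}), \ldots, (y_{t_n}, s_{t_n})$ from each other and from the remaining coordinates, combined with $n$ applications of the conditional-distribution hypothesis, produces the factorization $\prod_{l=1}^n \text{pr}(y_0 \in B_l \mid s_0 = s)$ times the probability of the partitioning event. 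Summing over the $(t_1, \ldots, t_n)$ and normalizing by $\text{pr}(\Omega_*^s)$ delivers $\prod_{l=1}^n \text{pr}(y_{(j_l)}^s \in B_l \mid \Omega_*^s)$, as required.

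The main obstacle will be the bookkeeping involved in decomposing $\{k_j^s = t\} \cap \Omega_*^s$ and its multi-index analogue into factors depending on disjoint blocks of indices; once the tail-event observation that $\Omega_*^s$ may be replaced by the tail-only event $B_t$ (respectively its multi-index version) is in place, everything else reduces to a direct application of joint independence and the equidistribution hypothesis.
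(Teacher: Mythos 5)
Your proof is correct and follows essentially the same route as the paper's: partition over the values of the index variables $k_j^s$ (or the tuple $(k_{j_1}^s,\ldots,k_{j_n}^s)$), decompose each event into factors supported on disjoint blocks of indices, apply joint independence and the equidistribution hypothesis, then sum and normalize. The only difference is cosmetic: where you handle $\Omega_*^s$ via the tail-event identity $\Omega_*^s=B_t$ and independence from the earlier coordinates, the paper instead observes that $\text{pr}(\Omega_*^s)=1$ so that conditioning on it is vacuous --- both are valid.
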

    \begin{proof}
    If $\text{pr}(\Omega_*)=1$ for some $\Omega_*\in\mathcal{F}$, $\text{pr}(W \mid \Omega_*)=\text{pr}(W)$ for any set $W\in\mathcal{F}$, and hence sets (and random elements) are independent with respect to $\text{pr}$ if and only if they are independent with respect to $\text{pr}(\cdot \mid \Omega_*)$.

For $t,t'\in\mathbb{Z}^+\bigcup\{0\}$ such that $t'\geq t+1$, let $E(t,t')$ be the random variable defined by
        \begin{equation*}
        E(t,t')=\begin{cases}
            0&~~\mbox{if $t'=t+1$} \\
            \sum_{i=t+1}^{t'-1}I\{s_i=s\}&~~\mbox{if $t'\geq t+2$.}
        \end{cases}
        \end{equation*}
Then for any $m\in\mathbb{Z}^+$, any $j_1,\ldots,j_m\in\mathbb{Z}^+$ such that $j_1<\cdots<j_m$, and any $t_1,\ldots,t_m\in\mathbb{Z}^+\bigcup\{0\}$ such that $t_1<\cdots<t_m$, $\{k_{j_1}=t_1,\ldots,k_{j_m}=t_m\}\bigcap\Omega_*^s=\{E(0,t_1)=j_1-1,s_{t_1}=s,E(t_1,t_2)=j_2-j_1-1,s_{t_2}=s,\ldots,E(t_{j_{m-1}},t_{j_m})=j_m-j_{m-1}-1,s_{t_m}=s\}\bigcap\Omega_*^s$, and $E(0,t_1),(Y_{t_1},s_{t_1}),E(t_1,t_2)$, $(Y_{t_2},s_{t_2}),\ldots,E(t_{j_{m-1}},t_{j_m}),(Y_{t_m},s_{t_m})$ are independent.

        Using the above observations and the assumptions that the $(y_i,s_i)$ are independent and the $y_i \mid (s_i=s)$ are distributed identically to $Y \mid (s_0=s)$, for any $Q$ in the Borel $\sigma$-algebra $\mathcal{B}$ of $M$,
        \begin{align*}
            \text{pr}(y_{(j)}^s\in Q \mid \Omega_*^s)&=\sum_{t=1}^\infty \text{pr}(y_t\in Q, k_j^s=t \mid \Omega_*^s) \\
            %&=\sum_{t=1}^\infty \text{pr}(y_t\in Q, k_j^s=t \mid \Omega_*^s) \\
            &=\sum_{t=1}^\infty \text{pr}(y_t\in Q,E(0,t)=j-1,s_t=s \mid \Omega_*^s) \\
            &=\sum_{t=1}^\infty \text{pr}(y_t\in Q,s_t=s \mid \Omega_*^s)\text{pr}(E(0,t)=j-1 \mid \Omega_*^s) \\
            %&=\sum_{t=1}^\infty \text{pr}(y_t\in Q,s_t=s)\text{pr}(E(0,t)=j-1 \mid \Omega_*^s) \\
            &=\sum_{t=1}^\infty \text{pr}(y_t\in Q \mid s_t=s,\Omega_*^s)\text{pr}(s_t=s \mid \Omega_*^s)\text{pr}(E(0,t)=j-1 \mid \Omega_*^s) \\
            %&=\sum_{t=1}^\infty \text{pr}(y_t\in Q \mid s_t=s)\text{pr}(s_t=s \mid \Omega_*^s)\text{pr}(E(0,t)=j-1 \mid \Omega_*^s) \\
            &=\sum_{t=1}^\infty \text{pr}(y_0\in Q \mid s_0=s)\text{pr}(E(0,t)=j-1,s_k=s \mid \Omega_*^s) \\
            &=\text{pr}(y_0\in Q \mid s_0=s)\sum_{t=1}^\infty \text{pr}(k_j^s=t \mid \Omega_*^s) \\
            &=\text{pr}(y_0\in Q \mid s_0=s),
        \end{align*}
        proving the first statement.
        
        Letting $t_0=0$ and using this result, the observations in the first two paragraphs of this proof, and the assumptions that the $(y_i,s_i)$ are independent and the $y_i \mid (s_i=s)$ are distributed identically to $y_0 \mid (s_0=s)$, for any $Q$ in the Borel $\sigma$-algebra $\mathcal{B}$ of $M$,
        \begin{align*}
            &\text{pr}\bigg(\bigcap\limits_{l=1}^m\{y_{(j_l)}^s\in Q_l\}\,\,\,\bigg|\,\,\,\Omega_*^s\bigg) \\
            %=&\sum_{t_1< \cdots <t_m}\text{pr}(y_{(j_1)}^s\in Q_1,\ldots,Y_{(j_m)}^s\in Q_m,k_{j_1}=t_1,\ldots,k_{j_m}=t_m|\Omega_*^s) \\
            =&\sum_{t_1<\cdots <t_m}\text{pr}\bigg(\bigcap\limits_{l=1}^m\{y_{t_l}\in Q_l,k_{j_l}=t_l\}\,\,\bigg|\,\,\Omega_*^s\bigg) \\
            =&\sum_{t_1<\cdots <t_m}\text{pr}\bigg(\bigcap\limits_{l=1}^m\{y_{t_l}\in Q_l,E(t_{l-1},t_l)=j_l-1,s_{t_l}=s\}\,\,\bigg|\,\,\Omega_*^s\bigg) \\
            =&\sum_{t_1<\cdots <t_m}\bigg(\prod_{l=1}^m\text{pr}(y_{t_l}\in Q_l,s_{t_l}=s \mid \Omega_*^s)\bigg)\text{pr}\bigg(\bigcap\limits_{l=1}^m\{E(t_{l-1},t_l)=j_l-1\}\,\,\bigg|\,\,\Omega_*^s\bigg) \\
            =&\sum_{t_1<\cdots <t_m}\bigg(\prod_{l=1}^m\text{pr}(y_{t_l}\in Q_l \mid s_{t_l}=s,\Omega_*^s)\text{pr}(s_{t_l}=s \mid \Omega_*^s)\bigg) \\
            &\qquad\text{pr}\bigg(\bigcap\limits_{l=1}^m\{E(t_{l-1},t_l)=j_l-1\}\,\,\bigg|\,\,\Omega_*^s\bigg) \\
            %=&\sum_{t_1<\ldots<t_m}\text{pr}(y_{t_1}\in Q_1 \mid s_{t_1}=s)\text{pr}(s_{t_1}=s)\ldots\text{pr}(y_{t_m}\in Q_m \mid s_{t_m}=s)\text{pr}(s_{t_m}=s)\text{pr}(E(0,t_1)=j_1-1,\ldots,E(t_{j_{m-1}},t_{j_m})=j_m-j_{m-1}-1 \mid \Omega_*^s) \\
            %=&\sum_{t_1<\ldots<t_m}\text{pr}(y_{t_1}\in Q_1 \mid s_{t_1}=s)\ldots\text{pr}(y_{t_m}\in Q_m \mid s_{t_m}=s)\text{pr}(s_{t_1}=s \mid \Omega_*^s)\ldots\text{pr}(s_{t_m}=s \mid \Omega_*^s)\text{pr}(E(0,t_1)=j_1-1,\ldots,E(t_{j_{m-1}},t_{j_m})=j_m-j_{m-1}-1 \mid \Omega_*^s) \\
            =&\sum_{t_1<\cdots <t_m} \bigg(\prod_{l=1}^m\text{pr}(y_{t_l}\in Q_l \mid s_{t_l}=s,\Omega_*^s)\bigg)\text{pr}\bigg(\bigcap\limits_{l=1}^m\{E(t_{l-1},t_l)=j_l-1,s_{t_l}=s\}\,\,\bigg|\,\,\Omega_*^s\bigg) \\
            =&\bigg(\prod_{l=1}^m\text{pr}(y_{t_l}\in Q_l \mid s_{t_l}=s,\Omega_*^s)\bigg)\sum_{t_1<\ldots<t_m}\text{pr}(k_{j_1}=t_1,\ldots,k_{j_m}=t_m \mid \Omega_*^s) \\
            =&\prod_{l=1}^m\text{pr}(y_0\in Q_l \mid s_0=s,\Omega_*^s) \\
            =&\prod_{l=1}^m\text{pr}(y_{(j_l)}\in Q_l \mid \Omega_*^s),
        \end{align*}
        completing the proof.
    \end{proof}
    
    \begin{lemma} \label{l2asc}
    Let $M$ be any metric space and $\alpha\in\{1/2,1,2\}$. If $f_{\alpha}(p)$, defined in (1) of the main paper, is finite,
    \begin{equation*}
    \hat{f}_{\alpha N}(p)=\sum_{i=1}^Nw_{iN}d(p,y_i)^\alpha=\sum_{s=1}^\Xi\hat{\lambda}_N^s\Bigg(\frac{1}{m_N^s}\sum_{i:s_i=s,1\leq i\leq N}d(p,y_i)^\alpha\Bigg),
    \end{equation*}
    converges almost surely to $f_{\alpha}(p)$.

    \end{lemma}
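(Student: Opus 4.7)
The strategy is to reduce the claim to a classical strong law of large numbers applied stratum by stratum. I first decompose
\[
\hat{f}_{\alpha N}(p) = \sum_{s=1}^\Xi \hat{\lambda}_N^s\, \bar{g}_N^s(p), \qquad \bar{g}_N^s(p) = \frac{1}{m_N^s} \sum_{i:\, s_i = s,\, 1 \leq i \leq N} d(p, y_i)^\alpha.
\]
Because $f_\alpha(p) = \sum_{s=1}^\Xi \text{pr}(s_0 = s)\, E(d(p, y_0)^\alpha \mid s_0 = s)$ by the tower property, the assumed finiteness of $f_\alpha(p)$ together with $\text{pr}(s_0 = s) > 0$ for every $s$ forces each conditional expectation $E(d(p, y_0)^\alpha \mid s_0 = s)$ to be finite. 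This will be the target of the stratum-wise convergence.

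Next I re-index within each stratum using the random indices $k_j^s$ of Lemma \ref{rv}. On the event $\Omega_*^s = \{m_N^s \to \infty\}$, I may rewrite
\[
\bar{g}_N^s(p) = \frac{1}{m_N^s} \sum_{j=1}^{m_N^s} d(p, y_{(j)}^s)^\alpha.
\]
Since $\text{pr}(\Omega_*^s) = 1$ by hypothesis, conditioning on $\Omega_*^s$ does not alter probabilities, so Lemma \ref{iid} says that $y_{(1)}^s, y_{(2)}^s, \ldots$ is (unconditionally) an i.i.d.\ sequence distributed as $y_0 \mid (s_0 = s)$. Kolmogorov's strong law of large numbers then yields
\[
\frac{1}{M} \sum_{j=1}^M d(p, y_{(j)}^s)^\alpha \xrightarrow{\text{a.s.}} E(d(p, y_0)^\alpha \mid s_0 = s)
\]
as $M \to \infty$, and evaluating this along the almost surely divergent random index $M = m_N^s$ gives $\bar{g}_N^s(p) \to E(d(p, y_0)^\alpha \mid s_0 = s)$ almost surely.

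To conclude, I intersect the $\Xi$ probability-one events on which, respectively, $\bar{g}_N^s(p) \to E(d(p, y_0)^\alpha \mid s_0 = s)$ and $\hat{\lambda}_N^s \to \text{pr}(s_0 = s)$ hold, obtaining a single set of full probability on which
\[
\hat{f}_{\alpha N}(p) = \sum_{s=1}^\Xi \hat{\lambda}_N^s\, \bar{g}_N^s(p) \longrightarrow \sum_{s=1}^\Xi \text{pr}(s_0 = s)\, E(d(p, y_0)^\alpha \mid s_0 = s) = f_\alpha(p).
\]
The only delicate point is passing from the deterministic-index SLLN to the random index $m_N^s$, which is immediate once one has the almost sure divergence $m_N^s \to \infty$. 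The real technical work, namely showing that the randomly chosen subsequence $y_{(j)}^s$ really is i.i.d.\ with the right conditional law, has already been absorbed into Lemmas \ref{rv} and \ref{iid}; the remainder is bookkeeping with countable intersections of probability-one events.
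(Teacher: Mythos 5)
Your proposal is correct and follows essentially the same route as the paper's proof: decompose by stratum, invoke Lemmas \ref{rv} and \ref{iid} to get an i.i.d.\ sequence within each stratum distributed as $y_0 \mid (s_0=s)$, apply the strong law along the almost surely divergent index $m_N^s$, and combine with the almost sure convergence of the $\hat{\lambda}_N^s$ on a countable intersection of full-measure events. Your explicit remarks on the finiteness of the conditional expectations and on the passage from a deterministic to a random index are sound and only make explicit what the paper leaves implicit.
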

    \begin{proof} By Lemma \ref{iid}, $y_{(1)}^s \mid (m_N^s\rightarrow\infty),y_{(2)}^s \mid (m_N^s\rightarrow\infty),\ldots$ are independent and distributed identically to $y_0 \mid (s_0=s)$ for each $s=1,\ldots,\Xi$.
    
    So
    \begin{align*}
    &\text{pr}\bigg(\frac{1}{m_N^s}\sum_{i:s_i=s,1\leq i\leq N}d(p,y_i)^\alpha\rightarrow E(d(p,y_0)^\alpha \mid s_0=s)\,\,\bigg|\,\, m_N^s\rightarrow\infty\bigg)\\
    =&\text{pr}\bigg(\lim_{N'\rightarrow\infty}\frac{1}{N'}\sum_j^{N'}d(p,y_{(j)}^s)^\alpha=E(d(p,y_0)^\alpha \mid s_0=s)\,\,\bigg|\,\, m_N^s\rightarrow\infty\bigg)\\
    =&1
    \end{align*}
    by the strong law of large numbers. Then since $\text{pr}(m_N^s\rightarrow\infty)=1$,
    \begin{equation*}
    \text{pr}\bigg(\frac{1}{m_N^s}\sum_{i:s_i=s,1\leq i\leq N}d(p,y_i)^\alpha\rightarrow E(d(p,y_0)^\alpha \mid s_0=s)\bigg)=1.
    \end{equation*}
    This is true for all $s=1,\ldots,\Xi$, and therefore, since almost sure convergence is preserved by products and sums, and $\hat{\Lambda}_N\rightarrow(\text{pr}(s_0=1),\ldots,\text{pr}(s_0=\Xi))$ almost surely, $\hat{f}_{\alpha N}(p)$ converges almost surely to 
    \begin{align*}
    \sum_{s=1}^\Xi\text{pr}(s_0=s)E(d(p,y_0)^\alpha \mid s_0=s)=E(E(d(p,y_0)^\alpha \mid s_0))=E(d(p,y_0)^\alpha),
    \end{align*}
    which is the definition of $f_{\alpha}(p)$.
    \end{proof}

    \begin{lemma} \label{l2basic}
    Let $M$ be a proper metric space and $\alpha\in\{1,2\}$. If $f_{\alpha}(p^*)$ is finite for some $p^*\in M$, then $f_{\alpha}$ is finite and continuous on all of $M$ and the Fr\'echet mean set is nonempty and compact.
    \end{lemma}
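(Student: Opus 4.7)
The plan is to prove the three claims --- finiteness of $f_\alpha$, continuity of $f_\alpha$, and nonemptiness plus compactness of the $L_\alpha$ estimator set --- using the triangle inequality together with the defining property of a proper metric space. For finiteness at any $p\in M$, the triangle inequality yields $d(p,y_0)^\alpha\le(d(p,p^*)+d(p^*,y_0))^\alpha$, which for $\alpha=1$ bounds $f_1(p)\le d(p,p^*)+f_1(p^*)<\infty$, and for $\alpha=2$, using $(a+b)^2\le 2a^2+2b^2$, bounds $f_2(p)\le 2d(p,p^*)^2+2f_2(p^*)<\infty$. In particular, if $f_2$ is finite on all of $M$, then so is $f_1$ on all of $M$ (by Jensen's inequality), a fact I will need in the continuity step.

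For continuity at a fixed $p\in M$, take $p_n\to p$. The reverse triangle inequality $\lvert d(p_n,y_0)-d(p,y_0)\rvert\le d(p_n,p)$ immediately handles $\alpha=1$ via $\lvert f_1(p_n)-f_1(p)\rvert\le d(p_n,p)\to 0$. For $\alpha=2$, factoring a difference of squares gives $\lvert d(p_n,y_0)^2-d(p,y_0)^2\rvert\le d(p_n,p)\bigl(d(p_n,y_0)+d(p,y_0)\bigr)$; taking expectations yields $\lvert f_2(p_n)-f_2(p)\rvert\le d(p_n,p)\bigl(f_1(p_n)+f_1(p)\bigr)$, and since $f_1$ is locally bounded near $p$ (from step one applied to $\alpha=1$), this tends to $0$.

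For the estimator set, let $L=\inf_{p\in M}f_\alpha(p)\le f_\alpha(p^*)<\infty$ and pick a minimizing sequence $p_n$. If $d(p_n,p^*)$ were unbounded, a subsequence would satisfy $d(p_n,p^*)\to\infty$, and then $d(p_n,y_0)\ge d(p_n,p^*)-d(p^*,y_0)\to\infty$ pointwise, so Fatou's lemma would give $\liminf_n f_\alpha(p_n)=\infty$, contradicting $f_\alpha(p_n)\to L$. Hence $p_n$ lies in some closed ball around $p^*$, which is compact by properness; a convergent subsequence together with continuity of $f_\alpha$ produces a minimizer, so the $L_\alpha$ estimator set is nonempty. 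Applying the same boundedness argument to any sequence inside the estimator set shows it is bounded, while continuity of $f_\alpha$ makes it closed, and properness then gives compactness.

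I expect the main obstacle to be the $\alpha=2$ continuity bound, where one must control the cross term $d(p_n,y_0)+d(p,y_0)$ uniformly; this is resolved precisely by the observation that finiteness of $f_2$ on $M$ implies finiteness of $f_1$ on $M$, so the step relies critically on doing the finiteness claim first. The remaining ingredients --- the triangle inequality, Fatou's lemma, and the Heine--Borel property of proper spaces --- are standard and slot in straightforwardly.
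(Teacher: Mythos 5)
Your proof is correct, and its skeleton (finiteness and continuity from the triangle inequality, then a minimizing sequence trapped in a compact ball by properness, then closedness plus boundedness for compactness) matches the paper's. The differences are worth noting. The paper only argues the $\alpha=1$ case explicitly and outsources $\alpha=2$ entirely to Theorem 2.1(a) of Bhattacharya and Patrangenaru (2003), whereas you give a self-contained unified treatment; your difference-of-squares bound $\lvert f_2(p_n)-f_2(p)\rvert\le d(p_n,p)\bigl(f_1(p_n)+f_1(p)\bigr)$, made legitimate by first deducing finiteness of $f_1$ from finiteness of $f_2$ via Jensen, is essentially the argument hidden inside that citation. For boundedness of the minimizing sequence, the paper integrates the triangle inequality to get $d(p_1,p_n)\le f_1(p_1)+f_1(p_n)$ directly, while you argue by contradiction with Fatou's lemma; both are valid, the paper's being slightly more elementary and giving an explicit radius, yours generalizing more readily to other loss exponents. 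Your observation that the same unboundedness contradiction applies to sequences inside the estimator set (each being trivially a minimizing sequence) correctly yields boundedness of the set itself, which together with closedness as the preimage of a point under the continuous $f_\alpha$ and properness gives compactness, exactly as in the paper.
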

    \begin{proof}
        The $\alpha=2$ case is proven in Theorem 2.1(a) of \cite{Bhattacharya2003}. Even though the theorem states that $M$ is a complete connected Riemannian manifold, the proof is valid for any proper metric space, as the authors note in Remark 2.3 of the same paper.

        The finiteness and continuity of $f_{1}$ follow immediately from the triangle inequality and integration. Letting $\xi=\inf_{p\in M}f_{1}(p)$, there exists by continuity of $G$ a sequence $p_n$ such that $f_{1}(p_n)$ goes to $\xi$. Then $\{f_{1}(p_n)\}$ is a bounded sequence. By the triangle inequality, $d(p_1,p_n)\leq d(p_1,X)+d(X,p_n)$, and taking expected values gives 
    \begin{equation} \label{bound}
    d(p_1,p_n)\leq f_{1}(p_1)+f_{1}(p_n).
    \end{equation}
    This and the boundedness of $\{f_{1}(p_n)\}$ imply that $\{p_n\}$ is also bounded. Proper metric spaces are complete, so the bounded sequence $\{p_n\}$ has a subsequence $\{p_{n'}\}$ that converges to some $p'$. $\{G(p_{n'})\}$ converges to $\xi$, so by the continuity of $f_{1}$, $f_{1}(p')=\xi$ and the geometric median set is nonempty. Since the geometric median set is the pre-image of the closed set $\{\xi\}$ under the continuous $f_{1}$, it is also closed. Replacing $p_1$ and $p_n$ in (\ref{bound}) with $p'$ and any minimizer $m\in g^{-1}(\xi)$, $d(p',m)\leq 2\xi$, so $g^{-1}(\xi)$ is bounded too. Therefore, it is compact as $M$ is proper.
    \end{proof}

    The rest of these lemmas and their proofs are heavily based on the proof of Theorem 2.3 in \cite{Bhattacharya2003}. 

    \begin{lemma} \label{l2sup}
    Let $M$ be a proper metric space  and $\alpha\in\{1,2\}$, and suppose $f_\alpha(p^*)$ is finite for some $p^*\in M$. For any compact $K\subset M$, 
    \begin{equation}
    \sup_{p\in K}\lvert\hat{f}_{\alpha N}(p)-f_{\alpha}(p)\rvert\rightarrow 0
    \end{equation}
    almost surely.
    \end{lemma}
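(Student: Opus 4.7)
The plan is to combine Lemma 3 (pointwise almost-sure convergence) with a standard equicontinuity argument on the compact set $K$. First I would observe that on the almost-sure event where $m_N^s\to\infty$ for each $s$, the weights satisfy $\sum_i w_{iN}=\sum_s \hat{\lambda}_N^s = 1$, so $\hat{f}_{\alpha N}$ is a bona fide weighted average of $d(p,y_i)^\alpha$.

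For the equicontinuity step, the $\alpha=1$ case is immediate: by the triangle inequality $|d(p,y)-d(q,y)|\leq d(p,q)$ uniformly in $y$, so both $f_1$ and every $\hat{f}_{1N}$ are $1$-Lipschitz on all of $M$. For $\alpha=2$, I would use the factorization $|d(p,y)^2-d(q,y)^2|\leq d(p,q)\bigl(d(p,y)+d(q,y)\bigr)$, and fix a reference point $p^*$ with $f_2(p^*)<\infty$; Jensen's inequality then gives $f_1(p^*)\leq\sqrt{f_2(p^*)}<\infty$. Since $K$ is compact, $C_K:=\sup_{p\in K}d(p,p^*)<\infty$, so $d(p,y)\leq C_K+d(p^*,y)$ for every $p\in K$. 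This yields Lipschitz constants bounded by $2C_K+2\hat{f}_{1N}(p^*)$ for $\hat{f}_{2N}$ restricted to $K$ and by $2C_K+2f_1(p^*)$ for $f_2$ restricted to $K$. Applying Lemma 3 with $\alpha=1$ to $\hat{f}_{1N}(p^*)$, I conclude that these random Lipschitz constants are uniformly bounded by some finite $L$ for all sufficiently large $N$, almost surely.

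Given $\epsilon>0$, compactness of $K$ yields a finite cover by balls $B(p_1,\epsilon/(3L)),\ldots,B(p_J,\epsilon/(3L))$. By Lemma 3, $\hat{f}_{\alpha N}(p_k)\to f_\alpha(p_k)$ almost surely for each $k=1,\ldots,J$, and since $J$ is finite this holds simultaneously almost surely. For any $p\in K$, choosing $p_k$ with $d(p,p_k)<\epsilon/(3L)$ and invoking the triangle inequality,
\begin{equation*}
|\hat{f}_{\alpha N}(p)-f_\alpha(p)|\leq|\hat{f}_{\alpha N}(p)-\hat{f}_{\alpha N}(p_k)|+|\hat{f}_{\alpha N}(p_k)-f_\alpha(p_k)|+|f_\alpha(p_k)-f_\alpha(p)|,
\end{equation*}
each of the three terms can be made smaller than $\epsilon/3$ for $N$ large enough, almost surely; taking $\sup_{p\in K}$ then yields the conclusion.

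The main obstacle is the $\alpha=2$ case: the Lipschitz constant of $\hat{f}_{2N}$ on $K$ is itself a random quantity, so controlling it uniformly in $N$ requires the extra application of Lemma 3 above, along with care to intersect the finitely many almost-sure events (pointwise convergence at each $p_k$, together with convergence of $\hat{f}_{1N}(p^*)$). Because only finitely many such events are involved, the intersection remains almost sure and the argument goes through.
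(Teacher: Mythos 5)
Your proof is correct and follows the same overall skeleton as the paper's: pointwise almost-sure convergence from Lemma 3, an equicontinuity bound on the compact set $K$, a finite net, and the three-term triangle inequality. The one genuine difference is how the equicontinuity is obtained. The paper works with the exponent $\alpha/2$ throughout, using the ``square-root triangle inequality'' $d(p_1,p_3)^{1/2}\leq d(p_1,p_2)^{1/2}+d(p_2,p_3)^{1/2}$ to get a H\"older-$\tfrac{1}{2}$-type modulus controlled by $\sup_{p\in K}\hat{f}_{\alpha/2,N}(p)$; this is precisely why Lemma 3 is stated for $\alpha\in\{1/2,1,2\}$ rather than just $\{1,2\}$. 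You instead produce genuine Lipschitz bounds: trivially for $\alpha=1$, and for $\alpha=2$ via the factorization $\lvert d(p,y)^2-d(q,y)^2\rvert\leq d(p,q)\bigl(d(p,y)+d(q,y)\bigr)$ together with $d(p,y)\leq C_K+d(p^*,y)$, so that the random Lipschitz constant is controlled by $\hat{f}_{1N}(p^*)$ and only Lemma 3 at exponent $1$ is needed (with $f_1(p^*)<\infty$ supplied by Jensen, as you note). Your route is slightly more elementary in that it avoids the fractional exponent entirely and handles the uniform boundedness of the random Lipschitz constants explicitly; the paper's version is more uniform across the two values of $\alpha$ but at the cost of carrying $\alpha/2=1/2$ through Lemma 3. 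Both arguments are sound, and your bookkeeping of the finitely many almost-sure events is handled correctly.
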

    \begin{proof} 
We have a triangle inequality of sorts for the square root of $d$: for $p_1,p_2,p_3\in M$,
    \begin{align} \label{triangle}
    &d(p_1,p_2)^{1/2}+d(p_2,p_3)^{1/2} \nonumber \\
    =&(d(p_1,p_2)+d(p_2,p_3)+2(d(p_1,p_2)d(p_2,p_3))^{1/2})^{1/2}\geq d(p_1,p_3)^{1/2}
    \end{align}
    by the regular triangle inequality and the non-negativity of $d$.
    
    Define $\mathrm{diam}(K)=\sup_{p,p^*\in K}d(p,p^*)$, which is finite by the compactness, and hence boundedness, of $K$. By Lemma \ref{l2basic}, $f_{\alpha}(p)$ is finite for all $p\in M$, and so is $f_{\alpha/2}(p)$ since $f_{\alpha/2}(p)^2<f_{\alpha}(p)$ by Jensen's inequality. Now fixing any $p_0\in K$, we use the triangle inequality if $\alpha=2$ and (\ref{triangle}) if $\alpha=1$ to show that there exists some $N_2(\omega)<\infty$ for all $\omega\in\Omega_2$, defined as $\{\omega:\hat{f}_{\alpha/2,N}(p_0)\rightarrow f_{\alpha/2}(p_0)\}\in\mathcal{F}$, such that, for all $N\geq N_2(\omega)$,
    \begin{align*}
    \sup_{p\in K}\hat{f}_{\alpha/2,N}(p)&=\sup_{p\in K}\sum_{i=1}^Nw_{iN}d(p,y_i)^{\alpha/2} \\
    &\leq\sup_{p\in K}\sum_{i=1}^Nw_{iN}(d(p_0,y_i)^{\alpha/2}+d(p,p_0)^{\alpha/2}) \\
    &\leq\sum_{i=1}^Nw_{iN}(d(p_0,y_i)^{\alpha/2}+\sup_{p\in K}d(p,p_0)^{\alpha/2}) \\
    &\leq f_{\alpha/2}(p_0)+1+\mathrm{diam}(K)^{\alpha/2} \\
    &<\infty,
    \end{align*}
    where $\hat{f}_{\alpha/2,N}(p_0)$ is as defined in Lemma \ref{l2asc}. Define $A=f_{\alpha/2}(p_0)+1+\mathrm{diam}(K)^{\alpha/2}$ and, for a fixed $\epsilon_0>0$, $\delta_1=(\epsilon_0/6A)^{2/\alpha}$. Using the above result and the triangle inequality if $\alpha=2$ and (\ref{triangle}) if $\alpha=1$, we have
    \begin{align} \label{ppstar}
    &\sup_{p,p*\in K:d(p,p^*)<\delta_1}\lvert\hat{f}_{\alpha N}(p)-\hat{f}_{\alpha N}(p^*)\rvert \nonumber\\
    =&\sup_{p,p^*\in K:d(p,p^*)<\delta_1}\bigg\lvert\sum_{i=1}^Nw_{iN}(d(p,y_i)^\alpha-d(p^*,y_i)^\alpha)\bigg\rvert \nonumber \\
    \leq&\sup_{p,p*\in K:d(p,p^*)<\delta_1}\sum_{i=1}^Nw_{iN}\big(\lvert(d(p,y_i)^{\alpha/2}-d(p^*,y_i))^{\alpha/2}\rvert\big)\big((d(p,y_i)^{\alpha/2}+d(p^*,y_i)^{\alpha/2})\big) \nonumber \\
    \leq&\sup_{p,p*\in K:d(p,p^*)<\delta_1}\sum_{i=1}^Nw_{iN}\big(d(p,p^*)^{\alpha/2}\big)\big(2d(p,y_i)^{\alpha/2}\big) \nonumber \\
    <&2\delta_1^{\alpha/2}\sup_{p\in K}\sum_{i=1}^Nw_{iN}d(p,y_i)^{\alpha/2} \nonumber \\
    =&\frac{\epsilon_0}{3}
    \end{align}
    for all $\omega\in\Omega_2$, $N\geq N_2(\omega)$. Now by the compactness of $K$ and the continuity of $f_{\alpha}$, proven in Lemma \ref{l2basic}, $f_{\alpha}$ is uniformly continuous on $K$ and thus there exists some $\delta_2$ such that 
    \begin{equation} \label{ppstar2}
    \sup_{p,p^*\in K:d(p,p^*)<\delta_2}\lvert f_{\alpha}(p)-f_{\alpha}(p^*)\rvert\leq\frac{\epsilon_0}{3}.
    \end{equation}
    Defining $\delta=\min\{\delta_1,\delta_2\}$ and $K$ being compact and hence totally bounded, there is a set of points $\{q_1,\ldots,q_l\}\subset K$ such that for all $p \in K$, $d(p,q(p))<\delta$ for some $q(p)\in\{q_1,\ldots,q_l\}$. Then (\ref{ppstar}) and the definition of $q(p)$ imply
    \begin{equation} \label{eps1}
    \sup_{p\in K}\lvert\hat{f}_{\alpha N}(p)-\hat{f}_{\alpha N}(q(p))\rvert<\frac{\epsilon_0}{3},
    \end{equation}
    and (\ref{ppstar2}) and the definition of $q(p)$ imply
    \begin{equation} \label{eps2}
    \sup_{p\in K}\lvert f_{\alpha}(p)-f_{\alpha}(q(p))\rvert<\frac{\epsilon_0}{3}.
    \end{equation}
    Finally, since $\{q_1,\ldots,q_l\}$ is a finite set and $\hat{f}_{\alpha N}(p)\rightarrow f_{\alpha}(p)$ on $\Omega_3$, defined as $\{\omega:\hat{f}_{\alpha N}(p_0)\rightarrow f_{\alpha}(p_0)\}\in\mathcal{F}$, there exists some $N_3(\omega_3)<\infty$ for all $\omega\in\Omega_3$ for which
    \begin{equation} \label{eps3}
    \max_{j=1,\ldots,l}\lvert\hat{f}_{\alpha N}(q_j)-f_{\alpha}(q_j)\rvert<\frac{\epsilon_0}{3}.
    \end{equation}
    So defining $N_4(\omega)=\max\{N_2(\omega),N_3(\omega)\}$ and noting that $\Omega_2$, $\Omega_3$, and hence $\Omega_4$, defined to be $\Omega_2\cap\Omega_3$, have probability 1 by Lemma \ref{l2asc},
    \begin{align*}
    &\sup_{p\in K}\lvert\hat{f}_{\alpha N}(p)-f_{\alpha}(p)\rvert \\
    \leq&\sup_{p\in K}\lvert\hat{f}_{\alpha N}(p)-\hat{f}_{\alpha N}(q(p))\rvert+\sup_{p\in K}\lvert\hat{f}_{\alpha N}(q(p))-f_{\alpha}(q(p))\rvert+\sup_{p\in K}\lvert F(q(p))-F(p)\rvert \\
    <&\frac{\epsilon_0}{3}+\max_{j=1,\ldots,l}\lvert\hat{f}_N(q_j)-F(q_j)\rvert+\frac{\epsilon_0}{3} \\
    <&\frac{\epsilon_0}{3}+\frac{\epsilon_0}{3}+\frac{\epsilon_0}{3}=\epsilon_0
    \end{align*}
    for all $\omega\in\Omega_4$, $n\geq N_4(\omega)$, using (\ref{eps1}), (\ref{eps2}) and (\ref{eps3}).
    \end{proof}
    
    \begin{lemma} \label{l2l}
    Let $\alpha\in\{1,2\}$. Recalling that $C$, defined to be the $L_\alpha$ estimator set of $y_0$, is nonempty, define $l$ to be $f_{\alpha}(p)$ for any $p\in C$, that is, $l=\min\{f_{\alpha}(q):q\in M\}$. There exist some compact $D\subset M$, $\Omega_5\in\mathcal{F}$ of probability 1 and an $N_5(\omega)<\infty$ for all $\omega\in\Omega_5$ for which $N\geq N_5(\omega)$ implies that $\hat{f}_{\alpha N}(p)>l+1$ for all $p\in M\backslash D$.
    \end{lemma}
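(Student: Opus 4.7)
The plan is to take $D$ to be a sufficiently large closed ball around a reference point $p_0\in C$; properness of $M$ makes such a ball compact, so the task reduces to choosing the radius. The mechanism is a reverse triangle inequality that forces $\hat{f}_{\alpha N}(p)$ to be large whenever $p$ lies far from $p_0$, up to a correction term that stabilizes almost surely as $N\to\infty$ by Lemma \ref{l2asc}.

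First I would fix any $p_0\in C$, so that $f_{\alpha}(p_0)=l$. Let $\Omega_5=\{\omega:\hat{f}_{\alpha N}(p_0)\rightarrow l\}$, which has probability $1$ by Lemma \ref{l2asc}. For each $\omega\in\Omega_5$, choose $N_5(\omega)<\infty$ so that $\hat{f}_{\alpha N}(p_0)<l+1/2$ whenever $N\geq N_5(\omega)$. The two values of $\alpha$ are then treated separately but in parallel.

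For $\alpha=1$, the reverse triangle inequality gives $d(p,y_i)\geq d(p,p_0)-d(p_0,y_i)$; averaging with weights $w_{iN}$ (which sum to $1$) yields $\hat{f}_{1N}(p)\geq d(p,p_0)-\hat{f}_{1N}(p_0)>d(p,p_0)-l-1/2$ on $\Omega_5$ for $N\geq N_5(\omega)$, so the closed ball of radius $2l+3/2$ about $p_0$ suffices for $D$. For $\alpha=2$, the elementary inequality $(a-b)^2\geq a^2/2-b^2$ (equivalent to $(a-2b)^2\geq 0$) applied to $a=d(p,p_0)$, $b=d(p_0,y_i)$ gives $d(p,y_i)^2\geq d(p,p_0)^2/2-d(p_0,y_i)^2$; averaging then yields $\hat{f}_{2N}(p)\geq d(p,p_0)^2/2-\hat{f}_{2N}(p_0)>d(p,p_0)^2/2-l-1/2$ on $\Omega_5$ for $N\geq N_5(\omega)$, and the closed ball of radius $\sqrt{4l+3}$ about $p_0$ works. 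In either case, $D$ is compact because $M$ is proper, and $p\in M\setminus D$ forces $\hat{f}_{\alpha N}(p)>l+1$ as required.

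The argument is essentially bookkeeping, not genuinely difficult; there is no real obstacle, since everything reduces to a single pointwise almost sure convergence already in hand. The one step worth naming as a mild trick is the quadratic bound $(a-b)^2\geq a^2/2-b^2$, which converts the reverse triangle estimate into a coercivity statement for $f_{2}$ and ensures that the $O(d(p,p_0)^2)$ lower bound dominates the $O(1)$ random correction at $p_0$.
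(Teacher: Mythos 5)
Your proof is correct and takes essentially the same route as the paper's: fix $p_0\in C$, use a reverse-triangle coercivity bound to show that $\hat f_{\alpha N}(p)$ grows with $d(p,p_0)$ up to a correction term controlled by the almost sure convergence $\hat f_{\alpha N}(p_0)\to l$ from Lemma \ref{l2asc}, and take $D$ to be a large closed ball, compact by properness. The only difference is cosmetic: your inequality $(a-b)^2\ge a^2/2-b^2$ yields a purely additive correction (so only an upper bound on $\hat f_{\alpha N}(p_0)$ is needed and the radius is explicit), whereas the paper expands $\lvert d(p,p_0)-d(p_0,y_i)\rvert^\alpha$ and controls the multiplicative term $2\hat f_{\alpha N}(p_0)^{1/2}$.
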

    \begin{proof}
    For a fixed $p_0\in C$,
    \begin{align*} %\label{fp}
    f_{\alpha}(p)&=E(d(p,y_0)^\alpha) \nonumber \\
    &\geq E(\lvert d(p,p_0)-d(p_0,y_0)\rvert^\alpha) \nonumber\\
    &=E(d(p,p_0)^\alpha-2d(p,p_0)^{\alpha/2}d(p_0,y_0)^{\alpha/2}+d(p_0,y_0)^\alpha) \nonumber\\
    &=d(p,p_0)^\alpha-2d(p,p_0)^{\alpha/2}E(d(p_0,y_0)^{\alpha/2})+f_\alpha(p_0) \nonumber\\
    &\geq d(p,p_0)^\alpha-2d(p,p_0)^{\alpha/2}E(d(p_0,y_0)^\alpha)^{1/2}+f_\alpha(p_0) \nonumber\\
    &=f_{\alpha}(p_0)+d(p,p_0)^{\alpha/2}\big(d(p,p_0)^{\alpha/2}-2f_{\alpha}(p_0)^{1/2}\big)
    \end{align*}
    and replacing $f_{\alpha}$ and $y_0$ in the above by $\hat{f}_{\alpha N}$ and $\bar{y}_N$, an $M$-valued random element whose distribution has mass $w_{iN}$ at $y_i$ for $i=1,\ldots,N$,
    \begin{equation} \label{fnp}
    \hat{f}_{\alpha N}(p)\geq\hat{f}_{\alpha N}(p_0)+d(p,p_0)^{\alpha/2}\Big(d(p,p_0)^{\alpha/2}-2\hat{f}_{\alpha N}(p_0)^{1/2}\Big).
    \end{equation}
    There exists some sufficiently large $\Delta$ for which $d(p,p_0)>\Delta$ implies $d(p,p_0)^{\alpha/2}(d(p,p_0)^{\alpha/2}-2f_{\alpha}(p_0)^{1/2})>3$. Then define $D=\{p:d(p,C)\leq\Delta\}\supset{\{p:d(p,p_0)\leq\Delta\}}$, which is closed and bounded and hence compact by properness; $D$ might equal $M$ if $M$ is compact. By the almost sure convergence of $\hat{f}_{\alpha N}(p_0)$ to $f_{\alpha}(p_0)=l$, shown in Lemma \ref{l2asc}, there exists some $\Omega_5\in\mathcal{F}$ of probability 1 and an $N_5(\omega)<\infty$ for all $\omega\in\Omega_5$ for which $N\geq N_5(\omega)$ implies both $\hat{f}_{\alpha N}(p_0)>l-1$ and $d(p,p_0)^{\alpha/2}(d(p,p_0)^{\alpha/2}-2\hat{f}_{\alpha N}(p_0)^{1/2})>2$ for all $q\in M\backslash D$, and so, by (\ref{fnp}), that $\hat{f}_{\alpha N}(p)>l-1+2=l+1$ for all $p\in M\backslash D$. %Therefore $f_{2}(p)>l+1$ and $\hat{f}_{2,N}(p)$ are both greater than $l+1$ on $M\backslash D$ for all $N\geq N_5(\omega)$, where $\omega\in\Omega_5$.
    \end{proof}

    % We can now prove Theorem \ref{l2gzation}.
    We can now prove Theorem 2.
    \begin{proof}
    Define $C^\epsilon=\{p\in M:d(p,C)<\epsilon\}$. If there exist some $\theta(\epsilon)>0$, $\Omega_1\in\mathcal{F}$ of probability 1 and $N_1(\omega)<\infty$ for all $\omega\in\Omega_1$ such that, for all $N\geq N_1(\omega), \omega\in\Omega_1$,
    \begin{align} \label{final}
    &\hat{f}_{\alpha N}(p)\leq l+\frac{\theta(\epsilon)}{2}
    \end{align}
    for all $p\in C$ and
    \begin{align} \label{final2}
    &\hat{f}_{\alpha N}(p)\geq l+\theta(\epsilon)
    \end{align}
    for all $p\in M\backslash C^\epsilon$, then no minimizer of $\hat{f}_{\alpha N}$, that is, no element of the weighted sample $L_\alpha$ estimator set, is in $M\backslash C^\epsilon$ if $\omega\in\Omega_1$ and $N\geq N_1(\omega)$, proving the desired result. 
    
    The set $D_\epsilon$, defined as $D\cap(M\backslash C^\epsilon)$, is closed and bounded, being the intersection of a closed and bounded set and a closed one, and hence compact because $M$ is a proper metric space. By continuity, $f_\alpha$ attains its minimum $\min\{f_\alpha(p):p\in D_\epsilon\}$, which we call $l_\epsilon$, on the compact $D_\epsilon$, and since $D_\epsilon\cap C$ is empty, $l_\epsilon>l$, so there is some $\theta(\epsilon)\in(0,1)$ such that $l_\epsilon>l+2\theta(\epsilon)$. We now use Lemma \ref{l2sup} for $K=D$, allowing us to find a set $\Omega_4\in\mathcal{F}$ and $N_4(\omega)<\infty$ for each $\omega\in\Omega_4$ such that $N\geq N_4(\omega)$ implies that $\sup_{p\in D}\lvert\hat{f}_{\alpha N}(p)-f_{\alpha}(p)\rvert<\theta(\epsilon)/2$, and thus $\hat{f}_{\alpha N}(p)\leq f_{\alpha}(p)+\theta(\epsilon)/2=l+\theta(\epsilon)/2$ for all $p\in C\subset D$, satisfying (\ref{final}), and $\hat{f}_{\alpha N}(p)>f_{\alpha}(p)-\theta(\epsilon)/2>l_\epsilon-2\theta(\epsilon)/2>l+\theta(\epsilon)$ for all $p\in D_\epsilon\subset D$. Taking $D$, $\Omega_5$ and $N_5(\omega)$ from Lemma \ref{l2l}, which imply that $\hat{f}_{\alpha N}(p)>l+1>l+\theta(\epsilon)$ on $M\backslash D$ for all $N\geq N_5(\omega)$, $\omega\in\Omega_5$, and letting $\Omega_1=\Omega_4\cap\Omega_5$ and $N_1(\omega)=\max\{N_4(\omega),N_5(\omega)\}$, gives $\hat{f}_{\alpha N}(p)>l+\theta(\epsilon)$ for all $p\in D_\epsilon\cup(M\backslash D)\supset M\backslash C^\epsilon$ if $N\geq N_1(\omega)$ and $\omega\in\Omega_1$, satisfying (\ref{final2}), completing the proof.
   \end{proof}

   % \subsection{Proof of Theorem \ref{l2sc}} \label{appenl2sc}
   \subsection{Proof of Theorem 3} \label{appenl2sc}

    % We still need a few more lemmas to prove Theorem \ref{l2sc}.
    We still need a few more lemmas to prove Theorem 3.

   \begin{lemma} \label{indp}
        Let $(y_1,s_1),\ldots,(y_N,s_N):\Omega\rightarrow M\times\{1,\ldots,\Xi\}$ be independent random elements from a probability space $(\Omega,\mathcal{F},\text{pr})$ into the Cartesian product of a metric space $(M,d)$ and the set $\{1,\ldots,\Xi\}$, equipped with the product $\sigma$-algebra induced by the Borel $\sigma$-algebra $\mathcal{B}$ of $(M,d)$ and the discrete $\sigma$-algebra on $\{1,\ldots,\Xi\}$, respectively. Let $z_i\in\{0,1\}$ be a random binary variable for each positive integer $i$ and assume that $Z_N=(z_1,\ldots,z_N)$ and $(y_1,\ldots,y_N)$ are independent given $(s_1,\ldots,s_N)$. For any distinct positive integers $t_1,\ldots,t_m\leq N$, $(z_1,\ldots,z_N)$ is independent of $(y_{t_1},\ldots,y_{t_m})$ given $(s_{t_1},\ldots,s_{t_m})$.
    \end{lemma}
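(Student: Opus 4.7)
The plan is to verify the definition of conditional independence by showing that for any $z\in\{0,1\}^N$ and any Borel set $A$ in the $m$-fold product $\sigma$-algebra on $M^m$, the conditional probability $\text{pr}(Z_N=z,\,Y_T\in A\mid S_T)$ factors as $\text{pr}(Z_N=z\mid S_T)\,\text{pr}(Y_T\in A\mid S_T)$, where for brevity I write $Y_T=(y_{t_1},\ldots,y_{t_m})$ and $S_T=(s_{t_1},\ldots,s_{t_m})$. The collection of rectangles $\{z\}\times A$, for $z\in\{0,1\}^N$ and measurable $A\subset M^m$, forms a $\pi$-system generating the relevant product $\sigma$-algebra, so factorization on such rectangles is enough.

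The first main step is to condition more finely on the full stratum vector $S_N=(s_1,\ldots,s_N)$ and invoke the tower property. Since $\sigma(S_T)\subset\sigma(S_N)$,
\begin{equation*}
\text{pr}(Z_N=z,\,Y_T\in A\mid S_T)=E\big[\text{pr}(Z_N=z,\,Y_T\in A\mid S_N)\,\big|\,S_T\big].
\end{equation*}
The hypothesis $Z_N\indep(y_1,\ldots,y_N)\mid S_N$ implies $Z_N\indep Y_T\mid S_N$, since $Y_T$ is measurable with respect to $\sigma(y_1,\ldots,y_N)$, so the inner conditional probability factors as $\text{pr}(Z_N=z\mid S_N)\,\text{pr}(Y_T\in A\mid S_N)$. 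The second main step uses the assumed independence of the pairs $(y_i,s_i)$: since $(Y_T,S_T)$ and $(Y_{T^c},S_{T^c})$ are then independent, the conditional distribution of $Y_T$ given $S_N$ depends only on $S_T$, so $\text{pr}(Y_T\in A\mid S_N)=\text{pr}(Y_T\in A\mid S_T)$ almost surely. Substituting back, the $\sigma(S_T)$-measurable factor $\text{pr}(Y_T\in A\mid S_T)$ can be pulled outside the outer conditional expectation, and one further application of the tower property gives $E[\text{pr}(Z_N=z\mid S_N)\mid S_T]=\text{pr}(Z_N=z\mid S_T)$, yielding the required product.

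The main obstacle is the measure-theoretic justification of the second main step, namely the claim that $\text{pr}(Y_T\in A\mid S_N)$ admits a version that is $\sigma(S_T)$-measurable for every Borel $A$. I would verify this first on rectangles $A=B_1\times\cdots\times B_m$, where the product form of the joint law of the independent pairs $(y_i,s_i)$ makes the identity $\text{pr}(Y_T\in A\mid S_N)=\text{pr}(Y_T\in A\mid S_T)$ transparent, and then extend to the full product Borel $\sigma$-algebra via a monotone class or Dynkin $\pi$-$\lambda$ argument. Every other step is a routine manipulation of standard conditional expectation properties.
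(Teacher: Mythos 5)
Your proof is correct and follows essentially the same route as the paper's: both refine the conditioning from $(s_{t_1},\ldots,s_{t_m})$ to the full vector $(s_1,\ldots,s_N)$, apply the hypothesized conditional independence there, use the independence of the pairs $(y_i,s_i)$ to show that the factor involving $(y_{t_1},\ldots,y_{t_m})$ depends only on $(s_{t_1},\ldots,s_{t_m})$, and then marginalize back. The paper carries this out as an explicit sum over the finitely many values of the complementary stratum vector (the discrete instance of your tower-property step), working with $\text{pr}(Z_N\in U\mid Y_{tm}\in V,S_{tm}=S)$ rather than the joint factorization, but the underlying decomposition is identical.
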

    \begin{proof}
        If $m=N$, the result is trivial, so assume $m<N$. Let $Y_{tm}=(y_{t_1},\ldots,y_{t_m})$, $S_{tm}=(s_{t_1},\ldots,s_{t_m})$, $s_{t'm}$ be the ordered $(N-m)$-tuple consisting of the elements of $(s_1,\ldots,s_N)$ not contained in $Y_{tm}$ and $H=\{1,\ldots,\Xi\}^{N-m}$. Then for any $U\subset\{0,1\}^N$, $V$ in the Borel $\sigma$-algebra of $M^m$ and $S\in\{1,\ldots,\Xi\}^m$,
        \begin{align*}
            &\text{pr}(Z_N\in U \mid Y_{tm}\in V,S_{tm}=S) \\
            =&\frac{\text{pr}(Z_N\in U,Y_{tm}\in V,S_{tm}=S)}{\text{pr}(Y_{tm}\in V,S_{tm}=S)} \\
            =&\frac{\sum_{S_{t'm}\in H}\text{pr}(Z_N\in U,Y_{tm}\in V,S_{tm}=S,s_{t'm}=S_{t'm})}{\text{pr}(Y_{tm}\in V,S_{tm}=S)} \\
            =&\frac{\sum_{S_{t'm}\in H}\text{pr}(Z_N\in U \mid Y_{tm}\in V,S_{tm}=S,s_{t'm}=S_{t'm})\text{pr}(Y_{tm}\in V,S_{tm}=S,s_{t'm}=S_{t'm})}{\text{pr}(Y_{tm}\in V,S_{tm}=S)} \\
            =&\frac{\sum_{S_{t'm}\in H}\text{pr}(Z_N\in U \mid S_{tm}=S,s_{t'm}=S_{t'm})\text{pr}(Y_{tm}\in V,S_{tm}=S)\text{pr}(s_{t'm}=S_{t'm})}{\text{pr}(Y_{tm}\in V,S_{tm}=S)} \\
            =&\frac{\sum_{S_{t'm}\in H}\text{pr}(Z_N\in U \mid S_{tm}=S,s_{t'm}=S_{t'm})\text{pr}(Y_{tm}\in V,S_{tm}=S)\text{pr}(s_{t'm}=S_{t'm})}{\text{pr}(Y_{tm}\in V,S_{tm}=S)} \\
            =&\sum_{S_{t'm}\in H}\text{pr}(Z_N\in U,s_{t'm}=S_{t'm}) \mid S_{tm}=S) \\
            =&\text{pr}(Z_N\in U \mid S_{tm}=S).
        \end{align*}
    \end{proof}

    For any $s\in\{1,\ldots,\Xi\}$ and $j\in\mathbb{Z}^+$, define $\Omega_T'=\{m_{TN}^s\rightarrow\infty\text{ for all $s=1,\ldots,\Xi$}\}$ and
    \begin{equation*}
    l_{Tj}=\begin{cases}
        \arg\min_{i\in\mathbb{Z}^+}\{\sum_{i=1}^kz_i=j\} &~~\mbox{on $\Omega_T'$} \\
        1 &~~\mbox{on $(\Omega_T')^c$,}
    \end{cases}
    \end{equation*}
    so that on $\Omega_T'$, $l_{Tj}$ is the $j$th positive integer for which $z_i=1$. Similarly, define $\Omega_C'=\{m_{CN}^s\rightarrow\infty\text{ for all $s=1,\ldots,\Xi$}\}$ and
    \begin{equation*}
    l_{Cj}=\begin{cases}
        \arg\min_{i\in\mathbb{Z}^+}\{\sum_{i=1}^k(1-z_i)=j\} &~~\mbox{on $\Omega_C'$} \\
        1 &~~\mbox{on $(\Omega_C')^c$.}
    \end{cases}
    \end{equation*}
    
    \begin{lemma}
    The maps $l_{Tj}$ and $l_{Cj}$ are random elements on $(\Omega,\mathcal{F},\text{pr})$, as are the maps defined by $z_{(j)}^T=z_{l_{Tj}}$, $z_{(j)}^C=z_{l_{Cj}}$, $r_{(j)}^T=r_{l_{Tj}}=r_{l_{Tj}}$, $r_{(j)}^C=r_{l_{Cj}}=r_{l_{Cj}}$, $s_{(j)}^T=s_{l_{Tj}}$ and $s_{(j)}^C=s_{l_{Cj}}$.
    \end{lemma}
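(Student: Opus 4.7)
The plan is to mirror the proof of Lemma 1 in the excerpt, which established measurability of $k_j^s$ and $y_{(j)}^s$ by the same kind of stratified construction. The argument is bookkeeping around a measurable partition induced by the stopping-time style definitions of $l_{Tj}$ and $l_{Cj}$.

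First I would note that the exceptional sets $\Omega_T' = \bigcap_{s=1}^\Xi \{m_{TN}^s \to \infty\}$ and $\Omega_C' = \bigcap_{s=1}^\Xi \{m_{CN}^s \to \infty\}$ lie in $\mathcal{F}$, since each $m_{TN}^s$ and $m_{CN}^s$ is a finite sum of indicators of measurable events. Next, to see that $l_{Tj}$ is a random variable it suffices to check $\{l_{Tj} = t\} \in \mathcal{F}$ for every $t \in \mathbb{Z}^+$. On $(\Omega_T')^c$, the value is set to $1$ by definition, so $\{l_{Tj} = 1\} \cap (\Omega_T')^c = (\Omega_T')^c$ and $\{l_{Tj} = t\} \cap (\Omega_T')^c = \emptyset$ for $t > 1$. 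On $\Omega_T'$, for $t > 1$,
\[
\{l_{Tj} = t\} \cap \Omega_T' = \Big\{\sum_{i=1}^{t-1} z_i = j - 1\Big\} \cap \{z_t = 1\} \cap \Omega_T',
\]
and for $t = 1$, the intersection equals $\{z_1 = 1\} \cap \Omega_T'$ if $j=1$ and $\emptyset$ if $j > 1$. All these sets are measurable because each $z_i$ is. The measurability of $l_{Cj}$ follows by the same argument with $1 - z_i$ in place of $z_i$.

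For the composed maps, $\Omega$ decomposes into the countable measurable partition $\Omega = \bigsqcup_{t=1}^\infty \{l_{Tj} = t\}$, and on each piece $z_{(j)}^T = z_{l_{Tj}}$ equals the random element $z_t$, which is itself measurable. Since a map that is measurable on each set of a countable measurable partition is measurable on the whole space, $z_{(j)}^T$ is a random element. The same argument, with $z_t$ replaced by $r_t$ or $s_t$ (all of which are random elements into the appropriate product space by assumption) and with $l_{Tj}$ replaced by $l_{Cj}$ where relevant, handles $z_{(j)}^C$, $r_{(j)}^T$, $r_{(j)}^C$, $s_{(j)}^T$, and $s_{(j)}^C$.

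I do not anticipate a genuine obstacle here, as this is a direct analogue of Lemma 1; the only mild care needed is in the case analysis for $t = 1$ and for the behavior on $(\Omega_T')^c$ and $(\Omega_C')^c$, which must match the convention fixing $l_{Tj} = l_{Cj} = 1$ off the full-measure set where the stratum counts diverge.
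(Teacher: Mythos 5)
Your proof is correct and follows exactly the route the paper intends: the paper simply states that the proof is ``completely analogous to that of Lemma 1,'' and your argument is precisely that analogue, writing $\{l_{Tj}=t\}$ as an intersection of events determined by $z_1,\ldots,z_t$ (with the case split on $\Omega_T'$ versus its complement) and then using the countable measurable partition $\{l_{Tj}=t\}_{t\ge 1}$ to get measurability of the composed maps. No gaps.
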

    The proof is completely analogous to that of Lemma \ref{rv}.
    
    \begin{lemma} \label{iid2}
   For each $s=1,\ldots,\Xi$, the random elements $r_{(j)}^T \mid (s_{(j)}^T=s)$ and $r_T \mid (s_0=s)$ are identically distributed, as are $r_{(j)}^C \mid (s_{(j)}^C=s,\Omega_C')$ and $r_C \mid (s_0=s,\Omega_C')$, while $(r_{(1)}^T,s_{(1)}^T) \mid \Omega_T',(r_{(2)}^T,s_{(2)}^T) \mid \Omega_T',\ldots$ are independent, as are $(r_{(1)}^C,s_{(1)}^C) \mid \Omega_C',(r_{(2)}^C,s_{(2)}^C) \mid \Omega_C',\ldots$.
    \end{lemma}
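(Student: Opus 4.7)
The plan is to adapt the proof of Lemma 2 to this treated/control setting, with the unconfoundedness condition (4) and Lemma 7 supplying the extra machinery needed to absorb the randomness in the treatment indicators $z_i$. I would focus on the treated case; the control case follows by the identical argument with $z_i$ replaced by $1-z_i$ and $\Omega_T'$ by $\Omega_C'$.

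The first observation is that, on $\Omega_T'$, the index $l_{Tj}$ is a function only of $(z_1,z_2,\ldots)$. Consequently, for $t_1<\cdots<t_m$ and $j_1<\cdots<j_m$, the event $\{l_{Tj_1}=t_1,\ldots,l_{Tj_m}=t_m\}\cap\Omega_T'$ can be rewritten as an intersection of events describing the positions of the $1$'s among $z_1,\ldots,z_{t_m}$ (namely, that they occur at precisely $t_1,\ldots,t_m$, with specified numbers of $0$'s between consecutive $t_l$'s) — exactly analogous to the decomposition using the auxiliary $E(\cdot,\cdot)$ quantities in the proof of Lemma 2, with $z_i$ playing the role there played by $I\{s_i=s\}$.

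Fix $N\geq t_m$. On the above indexing event we have $z_{t_l}=1$ and hence $r_{t_l}=r_{Tt_l}$. To compute the conditional distribution of $(r_{Tt_1},\ldots,r_{Tt_m})$ given this indexing event together with $(s_{t_1},\ldots,s_{t_m})=(\sigma_1,\ldots,\sigma_m)$, I would apply Lemma 7, which yields that $(z_1,\ldots,z_N)$ is independent of $(r_{Tt_1},\ldots,r_{Tt_m})$ given $(s_{t_1},\ldots,s_{t_m})$. This lets me drop the indexing event (which is $(z_1,\ldots,z_N)$-measurable on $\Omega_T'$) from the conditioning. Combining this with the independence of $(r_{Ti},r_{Ci},s_i)$ across $i$ and the hypothesis that $r_{Ti}\mid(s_i=\sigma)$ is distributed as $r_{T0}\mid(s_0=\sigma)$ gives a product formula directly analogous to the one at the end of the proof of Lemma 2. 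Summing over $t_1<\cdots<t_m$ (the conditional probabilities of which sum to $1$ under $\Omega_T'$) and then taking $m=1$ recovers the identical-distribution claim, while general $m$ yields the independence claim.

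The main obstacle I anticipate is purely bookkeeping: ensuring that the conditioning events carry positive probability, that Lemma 7 is applied at a sufficiently large but finite $N$ for each chosen tuple $(t_1,\ldots,t_m)$, and that the various summations interchange cleanly with the conditional expectations, much as they were handled (verbosely but without surprise) in Lemma 2. There is no new conceptual difficulty beyond marrying that argument to Lemma 7.
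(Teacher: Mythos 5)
Your proposal matches the paper's proof in all essentials: the paper likewise observes that $\{l_{Tj}=t\}$ is determined by $(z_1,\ldots,z_t)$ (so that $r_{(j)}^T=r_{Tt}$ on that event), invokes Lemma \ref{indp} to drop the indexing event from the conditioning given the strata, reuses the $E(\cdot,\cdot)$-style decomposition and product computation from Lemma \ref{iid}, and sums over $t_1<\cdots<t_m$ to obtain both the identical-distribution and independence claims. The bookkeeping concerns you flag are handled in the paper exactly as you anticipate, so there is no substantive difference.
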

    \begin{proof}
    For any positive integer $t$, $l_{Tj}=t$ is expressible as some set defined using only $(z_1,\ldots,z_t)$.
    
        Because $l_{Tj}=t$ implies that $z_t=1$, $r_{(j)}^T=r_t(z_t)=r_{Tt}$. These facts and the first paragraph of the proof of Lemma \ref{iid}, together with $(z_1,\ldots,z_t)\indep r_{Tt} \mid s_t$, which follows from Lemma \ref{indp}, and the identity of the distributions of $r_{Tt} \mid (s_t=s)$ and $r_T \mid (s_0=s)$ imply that for any $Q$ in the Borel $\sigma$-algebra $\mathcal{B}$ of $M$ and $s\in\{1,\ldots,\Xi\}$,
        \begin{align*}
            &\text{pr}(r_{(j)}^T\in Q \mid s_{(j)}^T=s) \\
            =&\frac{\text{pr}(r_{(j)}^T\in Q,s_{(j)}^T=s \mid \Omega_T')}{\text{pr}(s_{(j)}^T=s \mid \Omega_T')} \\
            =&\frac{\sum_{t=1}^\infty \text{pr}(r_{Tt}\in Q,s_t=s,l_{Tj}=t \mid \Omega_T')}{\sum_{t'=1}^\infty \text{pr}(s_{t'}=s,l_{Tj}=t' \mid \Omega_T')} \\
            =&\frac{\sum_{t=1}^\infty \text{pr}(r_{Tt}\in Q \mid s_t=s,l_{Tj}=t,\Omega_T')\text{pr}(s_t=s,l_{Tj}=t \mid \Omega_T')}{\sum_{t'=1}^\infty \text{pr}(s_{t'}=s,l_{Tj}=t' \mid \Omega_T')} \\
            =&\frac{\sum_{t=1}^\infty \text{pr}(r_{Tt}\in Q \mid s_t=s,\Omega_T')\text{pr}(s_t=s,l_{Tj}=t \mid \Omega_T')}{\sum_{t'=1}^\infty \text{pr}(s_{t'}=s,l_{Tj}=t' \mid \Omega_T')} \\
            =&\frac{\sum_{t=1}^\infty \text{pr}(r_T\in Q \mid s_0=s)\text{pr}(s_t=s,l_{Tj}=t \mid \Omega_T')}{\sum_{t'=1}^\infty \text{pr}(s_{t'}=s,l_{Tj}=t' \mid \Omega_T')} \\
            =&\text{pr}(r_T\in Q \mid s_0=s).
        \end{align*}
        The proof for $r_{(j)}^C \mid (s_{(j)}^C=s,\Omega_C')$ and $r_C \mid (s_0=s,\Omega_C')$ is analogous.
        
        This result, the first paragraphs of this proof and the proof of Lemma \ref{iid} and the assumptions that the $(r_{Ti},s_{Ti})$ are independent and the $r_{Ti} \mid (s_{Ti}=s)$ are distributed identically to $r_T \mid (s_0=s)$ imply that for any positive integer $m$, any $j_1,\ldots,j_m\in\mathbb{Z}^+$ such that $j_1<\cdots<j_m$, any $Q_1,\ldots,Q_m$ in the Borel $\sigma$-algebra $\mathcal{B}$ of $M$ and any $s_1,\ldots,s_m\in\{1,\ldots,\Xi\}$,
        \begin{align*}
            &\text{pr}\bigg(\bigcap\limits_{l=1}^m\{(r_{(j_1)}^T,s_{(j_l)}^T)\in (Q_l,\{s_l\})\}\,\,\bigg|\,\,\Omega_T'\bigg) \\
            =&\sum_{t_1<\cdots<t_m}\text{pr}\bigg(\bigcap\limits_{l=1}^m\{r_{Tt_l}\in Q_l,s_{t_l}=s_l,l_{Tj_l}=t_l\}\,\,\bigg|\,\,\Omega_T'\bigg) \\
            =&\sum_{t_1<\cdots<t_m}\text{pr}\bigg(\bigcap\limits_{l=1}^m\{r_{Tt_l}\in Q_l\}\,\,\bigg|\,\,\bigcap\limits_{l=1}^m\{s_{t_l}=s_l,l_{Tj_l}=t_l\}\bigcap\Omega_T'\bigg) \\
            &\qquad\text{pr}\bigg(\bigcap\limits_{l=1}^m\{s_{t_l}=s_l,l_{Tj_l}=t_l\}\,\,\bigg|\,\,\Omega_T'\bigg) \\
            =&\sum_{t_1<\cdots<t_m}\text{pr}\bigg(\bigcap\limits_{l=1}^m\{r_{Tt_l}\in Q_l\}\,\,\bigg|\,\,\bigcap\limits_{l=1}^m\{s_{t_l}=s_l\}\bigcap\Omega_T'\bigg)\text{pr}\bigg(\bigcap\limits_{l=1}^m\{s_{t_l}=s_l,l_{Tj_l}=t_l\}\,\,\bigg|\,\,\Omega_T'\bigg) \\
            =&\sum_{t_1<\cdots<t_m}\frac{\text{pr}\big(\bigcap\limits_{l=1}^m\{r_{Tt_l}\in Q_l,s_{t_l}=s_l\}\Big)}{\text{pr}\Big(\bigcap\limits_{l=1}^m\{s_{t_l}=s_l\Big)}\text{pr}\bigg(\bigcap\limits_{l=1}^m\{s_{t_l}=s_l,l_{Tj_l}=t_l\}\,\,\bigg|\,\,\Omega_T'\bigg) \\
            =&\sum_{t_1<\cdots<t_m}\Bigg(\prod_{l=1}^m\frac{\text{pr}(r_{Tt_l}\in Q_l,s_{t_l}=s_l)}{\text{pr}(s_{t_l}=s_l)}\Bigg)\text{pr}\bigg(\bigcap\limits_{l=1}^m\{s_{t_l}=s_l,l_{Tj_l}=t_l\}\,\,\bigg|\,\,\Omega_T'\bigg) \\
            =&\sum_{t_1<\cdots<t_m}\Bigg(\prod_{l=1}^m\text{pr}(r_{Tt_l}\in Q_l \mid s_{t_l}=s_l)\Bigg)\text{pr}\bigg(\bigcap\limits_{l=1}^m\{s_{t_l}=s_l,l_{Tj_l}=t_l\}\,\,\bigg|\,\,\Omega_T'\bigg) \\
            =&\sum_{t_1<\cdots<t_m}\bigg(\prod_{l=1}^m\text{pr}(r_T\in Q_l \mid s_0=s_l)\bigg)\text{pr}\bigg(\bigcap\limits_{l=1}^m\{s_{t_l}=s_l,l_{Tj_l}=t_l\}\,\,\bigg|\,\,\Omega_T'\bigg) \\
            =&\bigg(\prod_{l=1}^m\text{pr}(r_T\in Q_l \mid s_0=s_l)\bigg)\sum_{t_1<\cdots<t_m}\text{pr}\bigg(\bigcap\limits_{l=1}^m\{s_{t_l}=s_l,l_{Tj_l}=t_l\}\,\,\bigg|\,\,\Omega_T'\bigg) \\
            =&\bigg(\prod_{l=1}^m\text{pr}(r_{(j_l)}^T\in Q_l \mid s_{(j_l)}^T=s_l)\bigg)\text{pr}\bigg(\bigcap\limits_{l=1}^m\{s_{(j_l)}^T=s_l\}\,\,\bigg|\,\,\Omega_T'\bigg) \\
            =&\prod_{l=1}^m \text{pr}\left((r_{(j_l)}^T,s_{(j_l)}^T)\in(Q_l,\{s_l\}) \,\,\bigg|\,\, \Omega_T'\right).
        \end{align*}
        The result follows since $\{(Q,s) \mid Q\in\mathcal{B},s=1,\ldots,\Xi\}$ generates the induced product $\sigma$-algebra of $M\times\{1,\ldots,\Xi\}$. The proof for $(r_{(1)}^C,s_{(1)}^C) \mid \Omega_C',(r_{(2)}^C,s_{(2)}^C) \mid \Omega_C',\ldots$ is analogous.
    \end{proof}

    % We can now prove Theorem \ref{l2sc}.
    We can now prove Theorem 3.
    
    \begin{proof}

    % By Lemma \ref{iid2}, $(r_T,S),(r_{(1)}^T,s_{(1)}^T),(r_{(2)}^T,s_{(2)}^T)$ satisfy the conditions necessary to apply Theorem \ref{l2gzation},   
    By Lemma \ref{iid2}, $(r_T,S),(r_{(1)}^T,s_{(1)}^T),(r_{(2)}^T,s_{(2)}^T)$ satisfy the conditions necessary to apply Theorem 2, and so do $(r_C,S),(r_{(1)}^C,s_{(1)}^C),(r_{(2)}^C,s_{(2)}^C)$. Then, comparing the weights $w_{iN}$ in Theorem 2 to $w_{TiN}$ and $w_{CiN}$ in the definition of $T_{\alpha}$, we see that $\bar{A}_{TN}^\alpha$ and $\bar{A}_{CN}^\alpha$ are the stratification-weighted sample $L_\alpha$ estimator sets for the treated and control groups, respectively. Since $L_\alpha$ estimator sets are compact by Lemma \ref{l2basic} and $d:M\times M\rightarrow\mathbb{R}$ is continuous, for each $N$ there exist some $\bar{r}_{T}^*\in\bar{A}_{TN}^\alpha$ and $\bar{r}_{C}^*\in\bar{A}_{CN}^\alpha$ that achieve the infimum. Let $\mu_{\alpha T}$ and $\mu_{\alpha C}$ be the unique $L_\alpha$ estimators of $r_T$ and $r_C$, respectively. Then by Theorem 2, for any $\epsilon>0$, there exist some $\Omega_T'\in\mathcal{F}$ for which $\text{pr}(\Omega_T')=1$ and $N_T'(\omega)<\infty$ for all $\omega\in\Omega_T'$ such that $d(\bar{r}_{T}^*,\mu_{\alpha T})<\epsilon/2$ for all $N\geq N_T'(\omega)$, and some $\Omega_C'\in\mathcal{F}$ for which $\text{pr}(\Omega_C')=1$ and $N_C'(\omega)<\infty$ for all $\omega\in\Omega_C'$ such that $d(\bar{r}_{C}^*,\mu_{\alpha C})<\epsilon/2$ for all $N\geq N_C'(\omega)$. Letting $\Omega'=\Omega_T'\bigcap\Omega_C'$ and $N'(\omega)=\max\{N_T'(\omega),N_C'(\omega)\}$ for all $\omega\in\Omega'$, $T_{\alpha}(Z_N,R_N,S_N)=d(\bar{r}_{T}^*,\bar{r}_{C}^*)\in(d(\mu_{\alpha T},\mu_{\alpha C})-\epsilon/2-\epsilon/2,d(\mu_{\alpha T},\mu_{\alpha C})+\epsilon/2+\epsilon/2)=(d(\mu_{\alpha T},\mu_{\alpha C})-\epsilon,d(\mu_{\alpha T},\mu_{\alpha C})+\epsilon)$ for all $N\geq N'(\omega)$ and $\text{pr}(\Omega')=1$, proving the desired result for $T_{\alpha}$.
   \end{proof}

    % \subsection{Proof of Proposition \ref{bootci}} \label{boot}
    \subsection{Proof of Proposition 1} \label{boot}

    \begin{proof}
    Using the unconfoundedness from (4) in the main paper for $N=1$, 
    \begin{align*} E_F[E_F[\rho(r_1)\mid z_1=1,s_1]]&=E_F[E_F[\rho(r_{T1})\mid z_1=1,s_1]] \\
    &=E_F[E_F[\rho(r_{T1})\mid s_1]] \\
    &=E_F[\rho(r_{T1})],
    \end{align*}
    where $\rho:M\rightarrow\mathbb{R}$ is any measurable function, and similarly,  
    \begin{align*}
        E_F[E_F[\rho(r_1)\mid z_1=0,s_1]]=E_F[\rho(r_{C1})].
    \end{align*}
       Counterintuitively, the two equations above imply that assuming unconfoundedness, it is possible to calculate the distributions of $r_{T1}$ and $r_{C1}$ just from the distribution of $(z_1,r_1,s_1)$ without direct knowledge of the distributions of $r_{T1}$ and $r_{C1}$. Then letting $\rho(x)=d(p,x)^\alpha$ and by the uniqueness of the $L_\alpha$ estimators $\mu_T$ and $\mu_C$ of $r_T$ and $r_C$, respectively,
    \begin{equation} \label{gl2}
    H_{\alpha}(F)=d(\mu_{\alpha T},\mu_{\alpha C})
    \end{equation}
    is $\textsc{aate}$ if $\alpha=2$ and $\textsc{amte}$ $\alpha=1$.

    Now 
    \begin{align*}
    E_{\hat{F}_N}[E_{\hat{F}_N}[\rho(r_1)\mid z_1=1,s_1]]&=\sum_{s=1}^\Xi E_{\hat{F}_N}[I\{s_1=s\}]E_{\hat{F}_N}[\rho(r_1)\mid z_1=1,s_1=s] \\
    &=\sum_{s=1}^\Xi \frac{m_N^s}{N}\sum_{i=1}^N\frac{z_iI\{s_i=s\}\rho(r_i)}{m_{TN}^s} \\
    &=\sum_{i=1}^N w_{TiN}\rho(r_i),
    \end{align*}
    where each stratum-wise weight $\hat{\lambda}_N^s$ is set to be $m_{N}^s/N$. Similarly,
    \begin{align*}
    E_{\hat{F}_N}[E_{\hat{F}_N}[\rho(r_1)\mid z_1=0,s_1]]=\sum_{i=1}^N w_{CiN}\rho(r_i).
    \end{align*}
    Then again letting $\rho(x)=d(p,x)^\alpha$, 
    \begin{equation} \label{gl2hat}
    H_{\alpha}(\hat{F}_N)=T_{\alpha}(Z_N,R_N,S_N).
    \end{equation}
    \end{proof}
    
\end{document}